\numberwithin{equation}{section}
 \newtheorem{thm}{Theorem}
 \newtheorem{prop}[thm]{Proposition}
 \newtheorem{lem}[thm]{Lemma}
 \newtheorem{cor}[thm]{Corollary}
 \newtheorem{defn}{Definition}
 \newtheorem{rem}{Remark}
  \newcommand{\f}{\mathbb{F}_{q}}
  \newcommand{\fl}{\mathbb{F}_{q^{l}}}
\begin{document}
\title{An Authentication Scheme for Subspace Codes over Network Based on Linear Codes}
\author{Jun Zhang, Xinran Li and Fang-Wei Fu}
\address{Chern Institute of Mathematics, Nankai University, Tianjin, P.R. China}
\email{zhangjun04@mail.nankai.edu.cn; xinranli@mail.nankai.edu.cn; fwfu@nankai.edu.cn}
\thanks{This research is supported by the National Key Basic Research Program of China (Grant No. 2013CB834204), and the National Natural Science Foundation of
China (Nos. 61171082, 10990011 and 60872025). The author Jun Zhang is also supproted by the Chinese Scholarship Council under the State Scholarship Fund during visiting University of California, Irvine.}
\date{}
 \maketitle
\begin{abstract}
Network coding provides the advantage of maximizing the usage of network resources, and has great application prospects in future network communications. However, the properties of network coding also make the pollution attack more serious. In this paper, we give an unconditional secure authentication scheme for network coding based on a linear code $C$. Safavi-Naini and Wang \cite{SW} gave an authentication code for multi-receivers and multiple messages. We notice that the scheme of Safavi-Naini and Wang is essentially constructed with Reed-Solomon codes. And we modify their construction slightly to make it serve for authenticating subspace codes over linear network. Also, we generalize the construction with linear codes. The generalization to linear codes has the similar advantages as generalizing Shamir's secret sharing scheme to linear secret sharing sceme based on linear codes \cite{Shamir,MS,M1,M2,CC}. One advantage of this generalization is that for a fixed message space, our scheme allows arbitrarily many receivers to check the integrity of their own messages, while the scheme with Reed-Solomon codes has a constraint on the number of verifying receivers. Another advantage is that we introduce access structure in the generalized scheme. Massey~\cite{M1} characterized the access structure of linear secret sharing scheme by minimal codewords in the dual code whose first component is $1$. We slightly modify the definition of minimal codewords in~\cite{M1}. Let $C$ be a $[V,k]$ linear code. For any coordinate $i\in \{1,2,\cdots,V\}$, a codeword $\vec{c}$ in $C$ is called \emph{minimal respect to $i$} if the codeword $\vec{c}$ has component $1$ at the $i$-th coordinate and there is no other codeword whose $i$-th component is $1$ with support strictly contained in that of $\vec{c}$. Then the security of receiver $R_i$ in our authentication scheme is characterized by the minimal codewords respect to $i$ in the dual code $C^\bot$.

\keywords{ Authentication scheme, network coding, subspace codes, linear codes, minimal codewords, substitution attack.}
\end{abstract}

\section{Introduction}
\subsection{Background}
Network coding is a novel technique to achieve the maximum multicast throughput, which was introduced by Ahlswede {\it et al.} \cite{ACL}. It allows the intermediate node to generate output data by mixing its received data. In 2003, Li {\it et al.} \cite{LY} further showed that linear network coding is sufficient to achieve the optimal throughput in multicast networks. Subsequently, Ho {\it et al.} \cite{HKM} introduced the concept of random linear network coding, and proved that it achieves the maximum throughput of multicast network with high probability. Network coding is efficiently applicable to numerous forms of network communications, such as Internet TV, wireless networks, content distribution networks and P2P networks. Due to these advantages, network coding attracts many researchers and has developed very quickly.

However, networks using network coding impose security problems that traditional networks do not face. A particularly important problem is the pollution attack. If some nodes in the network are malicious and inject corrupted packets into the information flow, then the honest intermediate node mix invalid packet with other packets.
 According to the rule of network coding, the corrupted outgoing packets quickly pollute the whole network and cause all the messages to be decoded wrongly in the destination.

Recently several related works are proposed to address the pollution attack, such as homomorphic hashing, digital signature and message authentication code (MAC). Krohn {\it et al.} \cite{KFM} (see also \cite{GR}) used homomorphic hashing function to prevent pollution attacks. Yu {\it et al.} \cite{YWR} proposed a homomorphic signature scheme based on discrete logarithm and RSA, which however was showed insecurely by Yun {\it et al.} \cite{YCK}. Charles {\it et al.} \cite{CJL} gave a signature scheme based on Weil pairing over elliptic curves and provided authentication of the data in addition to detecting pollution attacks. Zhao {\it et al.} \cite{ZKM} designed a signature scheme that view all blocks of the file as vectors and make use of the fact that all valid vectors transmitted in the network should belong to the subspace spanned by the original set of vectors from the file. Boneh {\it et al.} \cite{BFK} proposed two signature schemes that can be used in conjunction with network coding to prevent malicious modification of messages, and they showed that their constructions had a lower signature length compared with related prior work. Boneh {\it et al.} \cite{BF} constructed a linearly homomorphic signature scheme that authenticates vectors with coordinates in the binary field $\mathbb{F}_{2}$. It is the first such scheme based on the hard problem of finding short vectors in integer lattices. Agrawal and Boneh \cite{AB} designed a homomorphic MAC system that allows checking the integrity of network coded data. These works provide computational security (i.e., the attacker's resources are limited) in network coding.

Besides digital signatures and MACs, authentication codes also satisfy the properties of authentication. However, authentication code provides unconditional security (i.e., the attacker has unlimited computational power). In the multi-receiver authentication model, a sender broadcasts an authenticated message such that all the receivers can independently verify the authenticity of the message with their own private keys. It requires a security that malicious groups of up to a given size of receivers can not successfully impersonate the transmitter, or substitute a transmitted message. Desmedt {\it et al.} \cite{DFY} gave an authentication scheme of single message for multi-receivers. Safavi-Naini and Wang \cite{SW} extended the DFY scheme \cite{DFY} to be an authentication scheme of multiple messages for multi-receivers. Note that their construction was not linear over the base field with respect to the message. Oggier and Fathi \cite{OF,OF2} made a little modification of the construction so that the construction can be used for network coding. Tang \cite{Tang} used homomorphic authentication codes to sign a subspace which provide an unconditionally security.

In this paper, we consider the combination of authentication code and secret sharing into multicast network coding. And we use subspace codes to transmit messages. The verifying nodes independently verify the authenticity of the message using each own private key, which is distributes by the trusted authority.
Our method of authentication for subspace codes is different from signature through sign a subspace \cite{ZKM,BFK}. Also, compared with the computational security of \cite{ZKM} and \cite{BFK}, our construction is an unconditionally secure authentication code. And compared with the homomorphic scheme \cite{Tang}, our scheme is not homomorphic.

Firstly, we recall the general model of network coding and the definition of subspace codes.
In the basic multicast model for linear network coding, a source node $s$ generates $n$ messages, each consisting of $m$ symbols in the base field $\f$. Let $\{x_{1},x_{2},\ldots,x_{n}\}\subseteq\f^{l\times 1}$ represent the set of messages. Based on the messages, the source node $s$ transmits a message over each outgoing channel. At a node in the network, the symbols on its outgoing channel are $\f$-linear combinations of incoming symbols. For a node $i$, define $Out(i)=\{e\in E : e $ is an outgoing channel of $i\}$, and $In(i) = \{e \in E : e $ is an incoming channel of $i\}$. If the channel $e$ of network carries packet $y(e)$, where $e\in Out(i)$, and $i$ is an internal nodes, then $y(e)$ satisfies $y(e)=\sum_{d\in in(i)}k_{de}y(d)$. The $|In(i)|\times |Out(i)|$ matrix $K_{i}=[k_{de}]_{d\in in(i), e\in Out(i)}$ is called the \emph{local encoding kernel at node} $i$. Note that each $y(e)$ is a linear combination of the messages sent by the source node, so there exists a vector $f_{e}\in \f^{1\times n}$ such that
\begin{equation*}
    y(e)= f_{e}\underline{\mathbf{X}},\,\mbox{where}\,\underline{\mathbf{X}}=\left(
                                                                               \begin{array}{c}
                                                                                 x_1 \\
                                                                                 x_{2}\\
                                                                                 \vdots\\
                                                                                 x_{n} \\
                                                                               \end{array}
                                                                             \right)\ .
\end{equation*}
The vector $f_{e}$ is called the \emph{global encoding vector} of channel $e$. Given the local encoding kernels for all the channels in network, the global encoding kernels can be calculated recursively in any upstream-to-downstream order as follows
$$f_e=\sum_{d\in in(i)}k_{de}f_d\ .$$
 Write the received vectors at a node $t$ as a column vector
 \begin{equation*}
    {A_{t}=(y(e)\ : e\in In(t))^{T}}=\left(
                                       \begin{array}{c}
                                         y(e_{1}) \\
                                         y(e_2) \\
                                         \vdots\\
                                         y(e_{e(t)}) \\
                                       \end{array}
                                     \right)\ ,
 \end{equation*}
 where $In(t)=\{e_{1},e_2,\cdots,e_{e(t)}\}$.
Then we have the decoding equation at the node $t$
 $$ { F_{t}\cdot\mathbf{\underline{X}}=A_{t}\ ,}$$
where
 \begin{equation*}
   F_{t}=(f_{e}: e\in In(t))^{T}=\left(
                                       \begin{array}{c}
                                         f_{e_{1}} \\
                                         f_{e_2} \\
                                         \vdots\\
                                         f_{e_{e(t)}} \\
                                       \end{array}
                                     \right)
 \end{equation*}
is called the \emph{global encoding kernel} at the node $t$.

The set of all subspaces of an $l$-dimensional vector space $\f^l$ forms a projective space $\mathcal{P}_{q}(l)$. The set of all $n$-dimensional subspaces of an $l$-dimensional vector space is called a \emph{Grassmannian manifold} $\mathcal{G}_{q}(l, n)$. A \emph{subspace code} \cite{KK} $\mathcal{C}\subset \mathcal{P}_{q}(l)$ is a collection of subspaces in $\mathcal{P}_{q}(l)$ (for details about subspace codes, one can see \cite{KK}). Moreover, if $\mathcal{C}\subset \mathcal{G}_{q}(l, n)$ then $\mathcal{C}$ is a \emph{constant-dimension code} of dimension $n$ \cite{KK,XF}. For subspace codes, the problem is formulated as transmission of subspaces through a linear network. Suppose the network has minimum cut $n$. Then the transmitter selects a vector space $V\in \mathcal{C}$ from some constant dimension code $\mathcal{C}\subset \mathcal{G}_{q}(l, n)$, and sends a basis of $V$ into the network. The receiver $t$ gathers packets he received, and spans them to form a vector space $U$. Then he regards the subspace $U$ as his received message. It is easy to see that if all channels in the network are error-free, then the node $t$ can decode the original message $V$, i.e., $U=V$ if and only if the global encoding kernel at the node $t$ is of full rank $n$ (also see \cite[Corollary 3]{KK}).

Because the secret key sharing process in our authentication scheme is similar as that in the linear secret sharing scheme, we recall some basic concepts of linear codes and the traditional linear secret sharing scheme.
Let $\f^V$ be the $V$-dimensional vector space over the finite field $\f$ with $q$ elements. For any vector $\vec{x}=(x_1,x_2,\cdots,x_V)\in \f^V$, the \emph{Hamming weight} $\mathrm{Wt}(\vec{x})$ of $\vec{x}$ is defined to be the number of non-zero coordinates, i.e.,
$$\mathrm{Wt}(\vec{x})=\#\left\{i\,|\,1\leqslant i\leqslant V,\,x_i\neq 0\right\}\ .$$
A \emph{linear $[V,k]$ code} $C$ is a $k$-dimensional linear subspace of $\f^V$. The \emph{minimum distance} $d(C)$ of $C$ is the minimum Hamming weight of all non-zero vectors in $C$, i.e.,
$$d(C)=\min\{\mathrm{Wt}(\vec{c})\,|\,\vec{c}\in C\setminus\{\vec{0}\}\}\ .$$
A linear $[V,k]$ code $C\subseteq \f^V$ is called a \emph{$[V,k,d]$ linear code} if $C$ has minimum distance $d$. A vector in $C$ is called a $codeword$ of $C$. A matrix $G\in \f^{k\times V}$ is call a \emph{generator matrix} of $C$ if rows of $G$ form a basis for $C$. A well known trade-off between the parameters of a linear $[V,k,d]$ code is the Singleton bound which states that
$$d\leqslant V-k+1\ .$$
 A $[V,k,d]$ code is called a \emph{maximum distance separable} (MDS) code if $d=V-k+1$. The \emph{dual code} $C^\bot$ of $C$ is defined as the set
\[
  \left\{\vec{x}\in \f^V\,|\,\vec{x}\cdot\vec{c}=0\,\mbox{for all }\vec{c}\in C\right\},
\]
where $\vec{x}\cdot\vec{c}$ is the inner product of vectors $\vec{x}$ and $\vec{c}$, i.e.,
\[
    \vec{x}\cdot\vec{c}=x_1c_1+x_2c_2+\cdots+x_Vc_V\ .
\]

The secret sharing scheme provides security of a secret key by ``splitting" it to several parts which are kept by different persons. In this way, it might need many persons to recover the original key. It can achieve to resist the attack of malicious groups of persons. Shamir \cite{Shamir} used polynomials over finite fields to give an $(S,T)$ threshold secret sharing scheme such that any $T$ persons of the $S$ shares can uniquely determine the secret key but any $T-1$ persons can not get any information of the key. A linear secret sharing scheme based on a linear code~\cite{M1} is constructed as follows: encrypt the secret to be the first coordinate of a codeword and distribute the rest of the codeword (except the first secret coordinate) to the group of shares. McEliece and Sarwate \cite{MS} pointed out that the Shamir's construction is essentially a linear secret sharing scheme based on Reed-Solomon codes. Also as a natural generalization of Shamir'construction, Chen and Cramer \cite{CC} constructed a linear secret sharing scheme based on algebraic geometric codes.

The \emph{qualified subset} of a linear secret sharing scheme is a subset of shares such that the shares in the subset can recover the secret key. A qualified subset is call \emph{minimal} if any share is removed from the qualified subset, the rests cannot recover the secret key. The \emph{access structure} of a linear secret sharing scheme consists of all the minimal qualified subsets. A codeword $\vec{v}$ in a linear code $C$ is said to be
\emph{minimal} if $\vec{v}$ is a non-zero codeword whose leftmost
nonzero component is a $1$ and no other codeword
$\vec{v}^{\prime}$ whose leftmost nonzero component is $1$ has support strictly contained in the support of $\vec{v}$. Massey \cite{M1,M2} showed that the access structure of a linear secret sharing scheme based on a linear code are completely determined by the minimal codewords in the dual code whose first component is $1$.
\begin{prop}[\cite{M1}]\label{minimal}
The access structure of the linear secret-sharing scheme corresponding to
the linear code $C$ is specified by those minimal
codewords in the dual code $C^{\perp}$ whose first component is $1$. In the manner that the set of shares specified by a
minimal codeword whose first component is $1$ in the dual code is the set of shares corresponding
to those locations after the first in the support of this minimal codeword.
\end{prop}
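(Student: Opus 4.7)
The plan is to reduce the access-structure claim to a linear-algebra statement about $C$ and then dualize. Fix a generator matrix $G\in \f^{k\times V}$ of $C$ with columns $g_{1},g_{2},\ldots,g_{V}$, so that every codeword is $\vec{c}=\vec{x}^{T}G$ with $c_{j}=\vec{x}^{T}g_{j}$ for a unique $\vec{x}\in \f^{k}$. Treating the first coordinate as the secret and the remaining $V-1$ coordinates as shares, a subset $B\subseteq \{2,3,\ldots,V\}$ of shares is \emph{qualified} exactly when the value $\vec{x}^{T}g_{1}$ is determined by the values $(\vec{x}^{T}g_{i})_{i\in B}$ for every $\vec{x}\in \f^{k}$. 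Standard linear algebra shows that this happens if and only if $g_{1}\in \mathrm{span}\{g_{i}\,:\,i\in B\}$, because the linear functional $\vec{x}\mapsto \vec{x}^{T}g_{1}$ must be a linear combination of the functionals $\vec{x}\mapsto \vec{x}^{T}g_{i}$, $i\in B$.

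Next I would translate into $C^{\perp}$. The relation $g_{1}=\sum_{i\in B}\lambda_{i}g_{i}$ is equivalent to $G\vec{v}^{T}=\vec{0}$ for the vector $\vec{v}$ defined by $v_{1}=1$, $v_{i}=-\lambda_{i}$ for $i\in B$, and $0$ elsewhere, i.e., to $\vec{v}\in C^{\perp}$ with $v_{1}=1$ and $\mathrm{supp}(\vec{v})\setminus\{1\}\subseteq B$. Any such $\vec{v}$ supplies the explicit reconstruction rule $c_{1}=-\sum_{i\neq 1}v_{i}c_{i}$, and conversely any dual codeword with nonzero first coordinate rescales to one with first component $1$. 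Thus $B$ is qualified if and only if there exists $\vec{v}\in C^{\perp}$ with $v_{1}=1$ and $\mathrm{supp}(\vec{v})\setminus\{1\}\subseteq B$.

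Finally I would match minimality on the two sides. Given a minimal qualified subset $B$, pick $\vec{v}\in C^{\perp}$ as above; minimality of $B$ forces $\mathrm{supp}(\vec{v})\setminus\{1\}=B$. Any $\vec{v}'\in C^{\perp}$ with $v'_{1}=1$ and $\mathrm{supp}(\vec{v}')\subsetneq\mathrm{supp}(\vec{v})$ would yield a qualified subset $\mathrm{supp}(\vec{v}')\setminus\{1\}\subsetneq B$, contradicting minimality of $B$ — hence $\vec{v}$ is a minimal codeword of $C^{\perp}$ in the sense of the paper. Conversely, if $\vec{v}$ is a minimal codeword of $C^{\perp}$ with $v_{1}=1$, then $B=\mathrm{supp}(\vec{v})\setminus\{1\}$ is qualified, and any strictly smaller qualified $B'\subsetneq B$ would produce a competing dual codeword with first component $1$ whose support is strictly contained in that of $\vec{v}$, again contradicting minimality. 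This establishes the bijection between minimal qualified subsets and minimal codewords of $C^{\perp}$ with first component $1$, and simultaneously yields the explicit reconstruction formula announced in the proposition.

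I expect no substantive obstacle: the content is entirely the generator-matrix/dual-code duality. The only care required is bookkeeping — consistently distinguishing the distinguished secret coordinate from the share coordinates, normalizing the first component of dual codewords to $1$, and translating ``support of $\vec{v}$'' into ``$\{1\}$ together with the shares specified by $\vec{v}$'' in both directions of the correspondence.
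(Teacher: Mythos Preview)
The paper does not supply its own proof of this proposition: it is quoted as a known result of Massey~\cite{M1} and stated without argument. Your proof is correct and is precisely the standard argument. The key step --- that a set $B$ of shares is qualified if and only if $g_{1}\in\mathrm{span}\{g_{i}:i\in B\}$ --- is exactly the observation the paper itself records later in Remark~\ref{motivation} (there for an arbitrary coordinate $i$ rather than the first), so your approach is fully aligned with how the authors think about the result even though they do not write out a proof here.

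One minor point of bookkeeping: the paper's definition of ``minimal codeword'' asks that no codeword whose \emph{leftmost} nonzero component is $1$ has strictly smaller support, whereas in your third paragraph you only exclude competitors $\vec{v}'$ with $v'_{1}=1$. These two conditions are equivalent for codewords whose first component is $1$ (if a competitor $\vec{v}'$ with $v'_{1}=0$ existed, subtracting a suitable multiple of it from $\vec{v}$ would produce a competitor with first component $1$ and strictly smaller support), so there is no gap, but you may wish to add a sentence noting this.
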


With the above preparation, we next present our construction and main results.
%In both schemes of Desmedt {\it et al.} \cite{DFY} and Safavi-Naini and Wang~\cite{SW}, the key distribution is similar to that in Shamir's secret sharing scheme \cite{Shamir}, using polynomials. Both schemes are $(V,k)$ threshold authentication scheme, i.e., any malicious groups of up to $k-1$ receivers can not successfully ( unconditional secure in the meaning of information theory) impersonate the transmitter, or substitute a transmitted message to any other receiver, while any $k$ receivers or more receivers can successfully impersonate the transmitter, or substitute a transmitted message to any other receiver. Actually, in the proof of security of the authentication scheme of Safavi-Naini and Wang, the security is equivalent to the difficulty to recover the private key of other receivers. So the security essentially depends on the security of key distribution.
\subsection{Our Construction and Main Results}
Suppose the base field of the network is the finite field $\f$ and we use subspace codes to transmit messages. Take the message space to be the Grassmannian manifold $\mathcal{G}_{q}(l, n)$. The source wants to send a message $U\in \mathcal{G}_{q}(l, n)$, he could send any basis $\vec{s}_1,\vec{s}_2,\cdots,\vec{s}_n$ for $U$. After network coding, any node in the network linearly combines the vectors he received to obtain a linear subspace of $\f^l$. Provided that no error occurs in the network, then the linear subspace is just the message sent by the source if the dimension of the subspace equals $n$. We authenticate the basis $\vec{s}_1,\vec{s}_2,\cdots,\vec{s}_n$. Instead of sending the original base directly, the source node actually sends the authenticated vectors. And each node in the network receives linear combinations of the tagged vectors. Some nodes $R_1,R_2,\cdots,R_V$ in the network can also use their own protocols to verify the integrity of the received vectors. We call these nodes \emph{verifying nodes}.

 There may be some malicious receivers in the network who collude to perform an impersonation attack by constructing a fake message, or a substitution attack by altering the message content such that the new tagged message can be accepted by some other receiver or specific receiver. To substitute the message, the malicious group should generate a vector not in the subspace sent by the source such that the vector with a tag can be accepted by some other receiver.

In this subsection, we present our construction of an authentication scheme based on a linear code for subspace codes in network coding.  It will be shown that the ability of our scheme to resist the attack of the malicious receivers is measured by the minimum distance of the dual code and minimal codewords respect to specific coordinate in the dual code.
\begin{flushleft}
\textbf{Construction:}
\end{flushleft}

Let $C\subseteq \fl^V$ be a linear code with minimum distance $d(C)\geqslant 2$. And assume that the minimum distance of the dual $C^\bot$ is $d(C^\bot)\geqslant 2$. Fix a generator matrix $G$ of $C$

\begin{equation*}
    G=\left(
  \begin{array}{cccc}
    g_{1,1} & g_{1,2} & \cdots & g_{1,V} \\
     g_{2,1} & g_{2,2} & \cdots & g_{2,V} \\
     \vdots & \vdots & \ddots & \vdots \\
    g_{k,1} & g_{k,2} & \cdots & g_{k,V}\\
  \end{array}
\right)\ .
\end{equation*}
Then make $G$ public.

\begin{itemize}
  \item \textbf{Key generation:} A trusted authority randomly chooses parameters
  \begin{equation*}
    A=\left(
  \begin{array}{cccc}
    a_{0,1} & a_{0,2} & \cdots & a_{0,k} \\
     a_{1,1} & a_{1,2} & \cdots & a_{1,k} \\
     \vdots & \vdots & \ddots & \vdots \\
    a_{M,1} & a_{M,2} & \cdots & a_{M,k}\\
  \end{array}
\right)\in \fl^{(M+1)\times k}\ .
\end{equation*}
  \item \textbf{Key distribution:}  The trusted authority computes
    \begin{equation*}
    B=A\cdot G=\left(
  \begin{array}{cccc}
    b_{0,1} & b_{0,2} & \cdots & b_{0,V} \\
     b_{1,1} & b_{1,2} & \cdots & b_{1,V} \\
     \vdots & \vdots & \ddots & \vdots \\
    b_{M,1} & b_{M,2} & \cdots & b_{M,V}\\
  \end{array}
\right)\ .
\end{equation*}
  Then the trusted authority distributes each verifier $R_i$ the $i$-th column of $B$ as the private key, for $i=1,2,\cdots,V$.

  \item \textbf{Authentication tag:} Assume that the source node sends a basis $\vec{s}_1,\vec{s}_2,\cdots,\vec{s}_n\in U$ of a $n$-dimensional subspace $U$ of $\f^l$. The trusted authority chooses an $\f$-linear isomorphism between $\f^l$ and $\fl$. Without any confusion, we identify $\f^l$ with $\fl$ via this isomorphism (this isomorphism is also made public). Then define multiplication in $\f^l$ via the multiplication in $\fl$. The source computes the tag map
      \begin{equation*}
  \begin{array}{cccc}
    L=[L_1,L_2,\cdots,L_k]: & \f^l & \rightarrow & \f^{kl} \\
     & \vec{s} &\mapsto &[L_1(\vec{s}),L_2(\vec{s}),\cdots,L_k(\vec{s})] \ ,
  \end{array}
\end{equation*}
where the map $L_i$ ($i=1,2,\cdots,k$) is defined by\footnote{Note that any $\f$-linear endomorphism $f$ of $\fl$ is of the form
$f(\vec{x})=\sum_{j=1}^l a_i\vec{x}^{q^{j-1}}$ (for any  $\vec{x}\in\fl$)
for some $a_i\in \f, i=1,2,\cdots,l$. To fit the linear operation of the network coding, the tag map should be $\f$-linear or ``near'' $\f$--linear on $\vec{s}$.
}
\[
   L_i(\vec{s})=a_{0,i}+\sum_{j=1}^M a_{j,i}\vec{s}^{q^{j-1}} \mbox{ for any } \vec{s}\in\f^l\ .
\]
For each $i=1,2,\cdots,n$, instead of $\vec{s}_i$, the source node actually sends packets $\vec{x}_i$ of the form
\[
  \vec{x}_i=[1,\vec{s}_i, L(\vec{s}_i)]\in \f^{1+l+kl}\ .
\] \end{itemize}

\begin{rem}
  Add ``$1$'' at the beginning of each tagged vectors, then this scheme can be used to random network coding for keeping the track of the network coding coefficients. In this way, the internal verifying nodes could not know the exact global encoding kernel, but also can do verification of the received vectors. We will see this in the verification step. For network coding with fixed local encoding kernel, we can delete the first bit $1$, and define the tag map $L_i$ to be linear with respect to vector $\vec{s}$, $\sum_{j=0}^M a_{j,i}\vec{s}^{q^{j}}$.
\end{rem}

\begin{flushleft}
\textbf{Verification:}
\end{flushleft}

Suppose the global encoding kernel at the verifying node $R_i$ is
  \begin{equation*}
    H_i=\left(
  \begin{array}{cccc}
    h_{1,1}^{(i)} & h_{1,2}^{(i)} & \cdots & h_{1,n}^{(i)} \\
     h_{2,1}^{(i)} & h_{2,2}^{(i)} & \cdots & h_{2,n}^{(i)} \\
     \vdots & \vdots & \ddots & \vdots \\
    h_{e(i),1}^{(i)} & h_{e(i),2}^{(i)} & \cdots & h_{e(i),n}^{(i)}\\
  \end{array}
\right)\ .
\end{equation*}
Then the node $R_i$ receives the tagged vector
\begin{gather*}
\vec{y}(R_i)=H_i \left(\begin{array}{c}
                   \vec{x}_1 \\
                   \vec{x}_2 \\
                   \vdots \\
                   \vec{x}_n
                 \end{array}\right)\ .
\end{gather*}
The $m$-th row is given by
\[
  \left(\sum_{j=1}^n h_{m,j}^{(i)},\sum_{j=1}^n h_{m,j}^{(i)}\vec{s}_j, \sum_{j=1}^n h_{m,j}^{(i)}L_1(\vec{s}_j),\cdots, \sum_{j=1}^n h_{m,j}^{(i)}L_k(\vec{s}_j)\right)\ .
\]
The verifier $R_i$ checks that whether
\[
 \left(\sum_{j=1}^n h_{m,j}^{(i)}\right)b_{0,i}+\sum_{t=1}^M \left(\sum_{j=1}^n h_{m,j}^{(i)}\vec{s}_j\right)^{q^{t-1}} b_{t,i}
\]
equals to
\[
  \sum_{t=1}^{k} \left(\sum_{j=1}^n h_{m,j}^{(i)}L_t(\vec{s}_j)\right) g_{t,i}\ ,
\]
for all $m=1,2,\cdots,e(i)$.

We call the result $(\sum_{j=1}^n h_{m,j}^{(i)})b_{0,i}+\sum_{t=1}^M \vec{s}^{q^{t-1}} b_{t,i}\in \fl$ the \emph{label} of $R_i$ for $\vec{s}\in\fl$.
\begin{flushleft}
\emph{Correctness of Verification:} if the network has not been invaded, the node $R_i$ should have
\end{flushleft}
\begin{equation*}
    \begin{array}{rl}
       & \sum_{t=1}^{k} \left(\sum_{j=1}^n h_{m,j}^{(i)}L_t(\vec{s}_j)\right) g_{t,i} \\
      = & \sum_{t=1}^{k} \left(\sum_{j=1}^n h_{m,j}^{(i)}(a_{0,t}+\sum_{r=1}^M a_{r,t}\vec{s}_j^{q^{r-1}})\right) g_{t,i} \\
      = & \sum_{j=1}^n h_{m,j}^{(i)}(\sum_{t=1}^{k}a_{0,t}g_{t,i})+\sum_{r=1}^M (\sum_{j=1}^nh_{m,j}^{(i)}\vec{s}_j^{q^{r-1}})(\sum_{t=1}^{k}a_{r,t}g_{t,i})\\
      = & (\sum_{j=1}^n h_{m,j}^{(i)})b_{0,i}+\sum_{r=1}^M \left(\sum_{j=1}^n h_{m,j}^{(i)}\vec{s}_j\right)^{q^{r-1}} b_{r,i}\ .
    \end{array}
\end{equation*}
for all $m=1,2,\cdots,e(i)$.

We summarize the extra costs in general when we communicate messages with the authentication tag:

\begin{tabular}{|c|c|}
  \hline
  % after \\: \hline or \cline{col1-col2} \cline{col3-col4} ...
 \mbox{Tag size} & $kl+1/\f$ \\
\hline
  \mbox{Communication cost} & $kl+1/\f$ \\
\hline
  \mbox{Tag computation cost} & $(M-1)kn \mbox{ exp. }/\fl$ \\
 &  $Mkn \mbox{ multi. }/\fl$\\
\hline
 \mbox{Verification computation cost at $R_i$} & $(M-1)e(i) \mbox{ exp. }/\fl$ \\
 & $(M+k+1)e(i) \mbox{ multi. }/\fl$\\
\hline
  \mbox{Storage at the source} & $(M+1)k/\fl$ \\
\hline
  \mbox{Storage at each verifier} & $M+1/\fl$\\
\hline
  \mbox{Key distribution computation cost} & $(M+1)kV \mbox{ multi. }/\fl$ \\
  \hline
\end{tabular}

Where $1/\f$, $1 \mbox{exp.}/\f$ and $1 \mbox{multi.}/\f$ mean one symbol, one exponent operation and one multiplication operation in the finite field $\f$, respectively.
When we use special generator matrix, e.g., the generator matrix of a systematic code, the cost at the verifying nodes will be less.
Note that the disadvantage is that the tag part introduces much redundancy comparing with the length of original vector.
%Comparing with the authentication scheme given in \cite{oggier}, the cost at each verifier and the key distribution computation can be
%reduced a little in our generalized authentication scheme. Also, it allows arbitrarily many verifying nodes
%for a fixed space $\f^l$ of messages, while Oggier's scheme has a restraint on the number of verifying nodes $V<q^{l}$.
%Next, we will show we can guarantee the same secure as in~\cite{oggier}.
\begin{flushleft}
\textbf{The Main Results about the Security of Our Scheme:}
\end{flushleft}
The security of the above authentication scheme is summarized in the following theorems.
\begin{thm}\label{thm1}
The scheme we constructed above is an unconditionally secure authentication code for network coding against a coalition of up to $(d(C^\bot)-2)$ malicious receivers.
\end{thm}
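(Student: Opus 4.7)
The plan is to reduce the security claim to the coding-theoretic fact that $d(C^{\bot})$ equals the smallest number of $\fl$-linearly dependent columns of $G$. Fix a coalition $I\subseteq\{1,\ldots,V\}$ with $|I|=t\leqslant d(C^{\bot})-2$; its total view consists of the column submatrix $AG_{I}$ of the key matrix $B=AG$, the public $G$, and (in a substitution attack) the $n$ sent tags $L(\vec{s}_{i})=\vec{s}_{i}^{*}A$, where $\vec{s}_{i}^{*}=(1,\vec{s}_{i},\vec{s}_{i}^{q},\ldots,\vec{s}_{i}^{q^{M-1}})\in\fl^{M+1}$. I pick an honest verifier $R_{j}$, $j\notin I$, and must bound the probability that some forged packet $[c,\vec{s},\vec{t}_{1},\ldots,\vec{t}_{k}]$ passes $R_{j}$'s check.

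The first step is an independence lemma: since $|I\cup\{j\}|\leqslant d(C^{\bot})-1$, the columns $\{g_{*,i}\}_{i\in I\cup\{j\}}$ of $G$ are $\fl$-linearly independent, so the map $A\mapsto (AG_{I},Ag_{*,j})$ is $\fl$-linear and surjective row-by-row. Hence for uniformly random $A$ the key $\kappa_{j}:=Ag_{*,j}$ is uniform on $\fl^{M+1}$ even after conditioning on the coalition's keys. An impersonation attack, where the adversary sees only $AG_{I}$, then falls out at once: the verification equation reads
\[
\vec{s}_{\mathrm{forged}}^{*}\cdot\kappa_{j}=\sum_{t=1}^{k}\vec{t}_{t}\,g_{t,j},\qquad \vec{s}_{\mathrm{forged}}^{*}=(c,\vec{s},\vec{s}^{q},\ldots,\vec{s}^{q^{M-1}}),
\]
and whenever $(c,\vec{s})\neq(0,0)$ its left-hand side is a nonzero $\fl$-linear functional of a uniform $\kappa_{j}$, hence uniform on $\fl$; no choice of $\vec{t}_{t}$ matches it with probability better than $1/q^{l}$.

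For the substitution case one further notices that $\vec{s}_{i}^{*}A\cdot g_{*,j}=\vec{s}_{i}^{*}\cdot\kappa_{j}$, so the sent tags expose precisely the values $\vec{s}_{i}^{*}\cdot\kappa_{j}$ for $i=1,\ldots,n$ and nothing more. A short computation with the kernel of the joint map $A\mapsto (AG_{I},S^{*}A)$ shows that $\kappa_{j}$ given the entire view is uniform on an affine coset of $(\mathrm{rowspan}_{\fl}S^{*})^{\perp}\subseteq\fl^{M+1}$, where $S^{*}$ has rows $\vec{s}_{i}^{*}$. So the $1/q^{l}$ bound is recovered as soon as $\vec{s}_{\mathrm{forged}}^{*}\notin\mathrm{rowspan}_{\fl}S^{*}$ for every genuine forgery.

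This last implication is where I expect the real work to lie. The argument needed is Gabidulin/linearized-polynomial in flavor: assume $\vec{s}_{\mathrm{forged}}^{*}=\sum_{i}\lambda_{i}\vec{s}_{i}^{*}$ with $\lambda_{i}\in\fl$; matching coordinates gives $\vec{s}=\sum_{i}\lambda_{i}\vec{s}_{i}$ together with $\vec{s}^{q^{r-1}}=\sum_{i}\lambda_{i}\vec{s}_{i}^{q^{r-1}}$ for $r=1,\ldots,M$, and comparing with the Frobenius power $(\sum_{i}\lambda_{i}\vec{s}_{i})^{q^{r-1}}=\sum_{i}\lambda_{i}^{q^{r-1}}\vec{s}_{i}^{q^{r-1}}$ forces $\sum_{i}(\lambda_{i}-\lambda_{i}^{q^{r-1}})\vec{s}_{i}^{q^{r-1}}=0$ for each $r$. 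With $M$ large enough and $\vec{s}_{1},\ldots,\vec{s}_{n}$ taken $\f$-linearly independent, this system pins the $\lambda_{i}$ down to $\f$, so $\vec{s}\in U=\mathrm{span}_{\f}\{\vec{s}_{i}\}$ and the supposed forgery is already a legitimate $\f$-linear combination of the transmitted packets. The coalition part (the independence lemma) is essentially painless; it is this Frobenius-power calculation that carries the real weight.
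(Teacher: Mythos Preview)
Your outline is correct and ultimately rests on the same two ingredients the paper uses: (i) any $d(C^{\bot})-1$ columns of $G$ are $\fl$-linearly independent, and (ii) a Moore-type rank argument on the message vectors. The packaging differs: the paper proceeds by explicitly counting key matrices $A$ consistent with the adversary's view (its key lemma on the combined linear system) and then proving that a label map $\varphi_{\vec{s}_{M+1}}$ is surjective with equal-size fibers, whereas you argue directly with the conditional distribution of $\kappa_{j}=Ag_{*,j}$. Your route is more modular and makes the separation between the code-theoretic part and the message-space part transparent; the paper's counting argument is heavier but entirely explicit.

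The one place your sketch under-sells the work is the last step. From the single $\fl$-equation $\sum_{i}(\lambda_{i}-\lambda_{i}^{q^{r-1}})\vec{s}_{i}^{q^{r-1}}=0$ for each fixed $r$ you cannot directly read off $\lambda_{i}\in\f$, because the $\vec{s}_{i}^{q^{r-1}}$ are not $\fl$-independent. What does work (and what the paper does, taking $M=n$) is the Moore determinant: setting $\mu_{i}=\lambda_{i}-\lambda_{i}^{q}$, the constraint $c\in\f$ together with consecutive Frobenius powers yields $(\mu_{1},\ldots,\mu_{n})$ annihilating the $n\times n$ matrix with rows $(1,\vec{s}_{i}^{\,q},\ldots,\vec{s}_{i}^{\,q^{n-1}})$, whose nonsingularity (a standard Moore computation using only $\f$-independence of the $\vec{s}_{i}$) forces all $\mu_{i}=0$. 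Equivalently, the paper shows that the $(n{+}1)\times(n{+}1)$ matrix obtained by appending the row $(1,\vec{s},\ldots,\vec{s}^{q^{n-1}})$ to $S_{n}$ is invertible whenever $\vec{s}\notin U$. So your ``Frobenius-power calculation'' and the paper's Moore-matrix step are the same fact; just be sure to invoke the determinant rather than stop at the coordinate comparison.
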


The proof of this theorem will be given in Section~2.

More specifically, if we consider what a coalition of malicious receivers can successfully make a substitution attack to one fixed receiver $R_i$. To characterize this malicious group, we slightly modify the definition of minimal codeword in \cite{M1}.
\begin{defn}
Let $C$ be a $[N,k]$ linear code. For any $i\in \{1,2,\cdots,N\}$, a codeword $\vec{c}$ in $C$ is called \emph{minimal respect to $i$} if the codeword $\vec{c}$ has component $1$ at the $i$-th location and there is no other codeword whose $i$-th component is $1$ with support strictly contained in that of $\vec{c}$.
\end{defn}
Similarly as Proposition~\ref{minimal}, we have
\begin{thm}\label{thm2}For the authentication scheme we constructed, we have
\begin{description}
\item[(i)] The set of all minimal malicious groups that can successfully make a substitution attack\footnote{Here, we CLARIFY that in the whole paper, ``can (successfully) make a substitution attack" means that they can make a substitution attack deterministically, and ``can not'' means that they can not successfully produce a fake authenticated message which can be accepted by others in a probability higher than randomly choosing one.} to the receiver $R_{i}$ is determined completely by all the minimal codewords respect to $i$ in the dual code $C^\bot$.

\item[(ii)] All malicious groups that can not produce a fake authenticated message which can be accepted by the receiver $R_{i}$ are one-to-one corresponding to subsets of $[V]\setminus\{i\}$ such that each of them together with $i$ does not contain any support of minimal codeword respect to $i$ in the dual code $C^\bot$, where $[V]=\{1,2,\cdots,V\}$.
\end{description}
\end{thm}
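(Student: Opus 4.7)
The plan is to reduce the substitution attack at $R_{i}$ to a linear-secret-sharing problem on the code $C$, mirroring Proposition~\ref{minimal} but with the secret placed at coordinate $i$ rather than coordinate $1$. First I would isolate what success at $R_{i}$ demands: a forged packet has the shape $[c,\vec{s},\vec{\tau}]$, and the verification condition reads
\[
 c\,b_{0,i}+\sum_{t=1}^{M}\vec{s}^{\,q^{t-1}}b_{t,i}=\sum_{t=1}^{k}\tau_{t}\,g_{t,i}\, .
\]
Because $d(C^{\bot})\geqslant 2$ forces the $i$-th column of $G$ to be nonzero, the right-hand side sweeps out all of $\fl$ as $\vec{\tau}$ varies, so deterministic success is equivalent to being able to compute the \emph{label} $c\,b_{0,i}+\sum_{t=1}^{M}\vec{s}^{\,q^{t-1}}b_{t,i}$ for some $(c,\vec{s})$ of the coalition's choosing, and hence to determining the private-key vector $\vec{b}_{i}=(b_{0,i},\ldots,b_{M,i})$.

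Next I would exploit the structure $B=AG$. With $A$ uniformly random, each row $B[r,\,\cdot\,]$ is an independent uniform codeword of $C$, and the coalition $T$ observes $B[r,T]$ for $r=0,1,\ldots,M$. Running the Massey argument row by row, $T$ can recover $B[r,i]$ from $B[r,T]$ iff $\vec{g}_{i}=\sum_{j\in T}\lambda_{j}\vec{g}_{j}$ for some $\lambda_{j}\in\fl$, where $\vec{g}_{j}$ is the $j$-th column of $G$; and this is in turn equivalent to the existence of $\vec{c}\in C^{\bot}$ with $c_{i}=1$ and $\mathrm{supp}(\vec{c})\subseteq T\cup\{i\}$. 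The criterion is independent of $r$, so a single such $\vec{c}$ already lets $T$ reconstruct all of $\vec{b}_{i}$ and forge deterministically. Under the correspondence $T\longleftrightarrow\mathrm{supp}(\vec{c})\setminus\{i\}$, the minimal coalitions that can attack $R_{i}$ would then match exactly the codewords minimal respect to $i$ in $C^{\bot}$, yielding (i).

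For the converse in (ii) I would suppose no such $\vec{c}$ exists. Then the linear functional $\vec{u}\mapsto u_{i}$ on $C$ is not in the $\fl$-span of the functionals $\vec{u}\mapsto u_{j}$ for $j\in T$, so conditional on $B[r,T]$ the value $b_{r,i}$ is uniform on $\fl$, independently across $r$. For any payload $(c,\vec{s})\neq(0,\vec{0})$ the label is then a nontrivial $\fl$-linear combination of uniform independent variables, hence itself uniform on $\fl$; the adversary's acceptance probability collapses to $q^{-l}$, exactly the random-guessing bound, which is precisely what (ii) asserts.

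The main obstacle I expect is to argue cleanly that the information the coalition gathers from \emph{honest} tagged packets traversing the network does not enlarge its effective view beyond $B[\,\cdot\,,T]$. Each tag component $L_{t}(\vec{s}_{j})$ constrains the $t$-th column of the secret matrix $A$, whereas $\vec{b}_{i}=A\vec{g}_{i}$ is a row-wise quantity, so the two leakages are geometrically transverse; moreover $P_{i}(\vec{s}_{j}):=\sum_{t}L_{t}(\vec{s}_{j})g_{t,i}$ is publicly computable from the honest packet and the public $G$, so honest traffic only exposes the label at the already-seen messages $\vec{s}_{j}$. I would make this precise by observing that when the forgery $\vec{s}$ lies outside the $\f$-linear span of the honest basis, the vector $(1,\vec{s},\vec{s}^{\,q},\ldots,\vec{s}^{\,q^{M-1}})$ is $\fl$-linearly independent of the corresponding vectors of the $\vec{s}_{j}$'s, so honest observations leave the label at $\vec{s}$ uniform whenever the dual-code recovery condition fails, which salvages the clean reduction used above.
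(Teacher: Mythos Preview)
Your reduction to a Massey-style secret-sharing problem at coordinate $i$ is exactly the paper's approach: the paper does not give a standalone proof of Theorem~\ref{thm2} but instead observes in Remark~\ref{motivation} that, by the proofs of Lemma~\ref{keylem} and Theorem~\ref{general}, a coalition $T$ succeeds against $R_{i}$ iff $g_{i}$ lies in the $\fl$-span of $\{g_{j}:j\in T\}$, and then invokes Proposition~\ref{minimal} with the secret relocated to position $i$. Your sufficiency direction and your translation into minimal codewords with respect to $i$ match this exactly.

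The one genuine difference is how you handle the ``obstacle'' that the coalition sees not only its private keys $B[\,\cdot\,,T]$ but also the honest tags $L_{t}(\vec{s}_{j})$. The paper absorbs this into the explicit solution count of Lemma~\ref{keylem}, which enumerates all $A$ compatible with both types of constraints and then shows, inside the proof of Theorem~\ref{general}, that the label map $\varphi_{\vec{s}_{M+1}}$ is a uniform surjection onto $\fl$. You instead argue conceptually that the two leakages are transverse: the private keys constrain $A$ through right-multiplication by columns in $\mathrm{span}\{g_{j}:j\in T\}$, the tags through left-multiplication by rows of $S_{n}$, and the target label is $v^{T}Ag_{i}$ with $g_{i}\notin\mathrm{span}\{g_{j}\}$ and $v\notin\mathrm{rowspan}(S_{n})$; a tensor-product argument then shows $g_{i}\otimes v$ is not in the sum of the two constraint spaces, so the label stays uniform. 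This is correct and arguably cleaner than the paper's counting, though it buys the same conclusion. Note that to make it airtight you are implicitly using $M=n$ and the Moore-matrix invertibility (the same device the paper invokes) to guarantee $v\notin\mathrm{rowspan}(S_{n})$ when $\vec{s}$ is $\f$-independent of the honest basis; you should state this dependence explicitly.
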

If we take MDS codes, e.g., Reed-Solomon codes, in our construction, Theorems~\ref{thm1} and~\ref{thm2} induces the following corollary.
\begin{cor}\label{cor}
Let $C$ be a $[V,k,d]$ MDS code. For our authentication scheme based on $C$, we have
\begin{description}
\item[(i)] The scheme is an unconditionally secure authentication code for network coding against a coalition of up to ($k-1$) malicious receivers.
\item[(ii)] Moreover, a malicious group can successfully make a substitution attack to any other receiver if and only if the malicious group has at least $k$ members.
\end{description}
\end{cor}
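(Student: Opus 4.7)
The plan is to derive Corollary~\ref{cor} as a direct specialization of Theorems~\ref{thm1} and~\ref{thm2}, using two standard facts about MDS codes: if $C$ is a $[V,k,V-k+1]$ MDS code then the dual $C^\bot$ is a $[V,V-k,k+1]$ MDS code, and in an MDS code any set of coordinates of size equal to the minimum distance actually supports a (unique up to scalar) nonzero codeword, whose nonzero entries may be renormalized freely.

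For part (i), I would simply read off $d(C^\bot)=k+1$ from the duality statement and substitute into Theorem~\ref{thm1}: the scheme resists coalitions of up to $d(C^\bot)-2=k-1$ malicious receivers, which is exactly the claim.

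For part (ii), the key step is to pin down the minimal codewords of $C^\bot$ respect to $i$. I claim these are exactly the codewords of weight $k+1$ whose $i$-th coordinate equals $1$. One direction is immediate: such a codeword has support of size $d(C^\bot)$, and no nonzero codeword can have strictly smaller support, so minimality respect to $i$ is automatic. For the converse, given any $(k+1)$-subset $S\ni i$, the MDS property of $C^\bot$ provides a nonzero codeword vanishing on the $V-k-1$ coordinates outside $S$ (those coordinates lie inside an information set of $C^\bot$, so the codewords vanishing on them form a one-dimensional subspace); the minimum-distance bound then forces this codeword to be supported exactly on $S$, and a scaling makes its $i$-th entry equal to $1$. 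Consequently, the supports of minimal codewords respect to $i$ run through all $(k+1)$-subsets of $[V]$ containing $i$, and by Theorem~\ref{thm2}(i) the minimal malicious coalitions capable of attacking $R_i$ are precisely the $k$-element subsets of $[V]\setminus\{i\}$.

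To finish part (ii), I combine the two halves of Theorem~\ref{thm2}. A coalition $B\subseteq[V]\setminus\{i\}$ can mount a successful substitution attack on $R_i$ iff $B$ contains some minimal malicious group, iff $B$ contains a $k$-subset of $[V]\setminus\{i\}$, iff $|B|\geqslant k$; conversely, Theorem~\ref{thm2}(ii) says $B$ fails iff $B\cup\{i\}$ contains no support of a minimal codeword respect to $i$, which by the above characterization translates to $|B\cup\{i\}|\leqslant k$, i.e.\ $|B|\leqslant k-1$. The only nontrivial point in the whole argument — and where I would be most careful — is the reverse direction that exhibits, for every prescribed $(k+1)$-set through $i$, an actual codeword of $C^\bot$ with that support and $i$-th coordinate $1$; this is a direct consequence of the MDS property (equivalently, every $V-k$ columns of a generator matrix of $C^\bot$ are independent), but it is the ingredient that makes the bijection with $k$-subsets of $[V]\setminus\{i\}$ tight rather than merely one-sided.
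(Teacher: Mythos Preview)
Your proposal is correct and follows precisely the route the paper indicates: the corollary is stated as a direct consequence of Theorems~\ref{thm1} and~\ref{thm2} specialized to MDS codes, and your use of the duality fact $d(C^\bot)=k+1$ together with the characterization of minimal codewords of $C^\bot$ as the minimum-weight codewords fills in exactly the details the paper omits. One small remark: your ``converse'' paragraph actually proves existence of a weight-$(k+1)$ codeword on every $(k+1)$-set through $i$, which combined with your first direction implicitly rules out minimal codewords of larger weight; you might state that one-line deduction explicitly, but the argument is complete.
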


If an authentication scheme satisfies conditions (i) and (ii) in Corollary~\ref{cor}, then we call the authentication scheme a \emph{$(V,k)$ threshold} authentication scheme. In general, it is \textbf{NP}-hard to determine completely that a malicious group can successfully make a substitution attack to others or not. More authentication schemes based on algebraic geometry codes from elliptic curves will be given in Section~3. And we use the group of rational points on the elliptic curve to give a complete classification as Corollary~\ref{cor}.

In Section~2, we explicitly give the security analysis of our authentication scheme, i.e., the proofs of Theorems~\ref{thm1} and ~\ref{thm2}. In Section~3, we give an explicit authentication scheme based on algebraic geometry codes from elliptic curves.
\section{Security Analysis}
In this section, we present the security analysis of our scheme. From the verification step, we notice that the tagged vector $[a,\vec{s},\vec{v}_1,\vec{v}_2,\cdots,\vec{v}_k]$ of one incoming edge can be accepted by the receiver $R_i$, where $a\in\f$ is the corresponding track of the network coding coefficients, if and only if $ab_{0,i}+\sum_{t=1}^{M} \vec{s}^{q^{t-1}} b_{t,i}= \sum_{j=1}^{k} \vec{v}_jg_{j,i}$. So in order to make a substitution attack to $R_i$, it suffices to know the label $ab_{0,i}+\sum_{t=1}^{M} \vec{s}^{q^{t-1}} b_{t,i}$ for some $\vec{s}\in \f^l$ not in the subspace sent by the transmitter, then it is trivial to construct a tag $(\vec{v}_1,\vec{v}_2,\cdots,\vec{v}_k)$ such that $ab_{0,i}+\sum_{t=1}^{M} \vec{s}^{q^{t-1}} b_{t,i}= \sum_{j=1}^{k} \vec{v}_jg_{j,i}$.

The security depends on the hardness to determine the key matrix $A$ or to determine the private key of some other node by solving a system of linear equations. Suppose a group of $K$ malicious nodes collaborate to recover $A$ and make a substitution attack. Without loss of generality, we assume that the malicious nodes are $R_1,R_2,\cdots,R_K$. Each $R_i$ has some information about the key $A$:
\begin{gather*}\label{equation1}
   \left(
     \begin{array}{ccccc}
      \sum_{j=1}^n h_{1,j}^{(i)}& \sum_{j=1}^n h_{1,j}^{(i)}\vec{s}_j & \sum_{j=1}^n h_{1,j}^{(i)}\vec{s}_j^q & \cdots & \sum_{j=1}^n h_{1,j}^{(i)}\vec{s}_j^{q^{M-1}} \\
      \sum_{j=1}^n h_{2,j}^{(i)}&  \sum_{j=1}^n h_{2,j}^{(i)}\vec{s}_j & \sum_{j=1}^n h_{2,j}^{(i)}\vec{s}_j^q & \cdots & \sum_{j=1}^n h_{2,j}^{(i)}\vec{s}_j^{q^{M-1}}\\
      \vdots & \vdots & \vdots & \ddots & \vdots \\
      \sum_{j=1}^n h_{e(i),j}^{(i)}& \sum_{j=1}^n h_{e(i),j}^{(i)}\vec{s}_j & \sum_{j=1}^n h_{e(i),j}^{(i)}\vec{s}_j^q & \cdots & \sum_{j=1}^n h_{e(i),j}^{(i)}\vec{s}_j^{q^{M-1}}\\
     \end{array}
   \right)\cdot A\\=
   \left(
     \begin{array}{cccc}
       \sum_{j=1}^n h_{1,j}^{(i)}L_1(\vec{s}_j) & \sum_{j=1}^n h_{1,j}^{(i)}L_2(\vec{s}_j) & \cdots & \sum_{j=1}^n h_{1,j}^{(i)}L_k(\vec{s}_j) \\
        \sum_{j=1}^n h_{2,j}^{(i)}L_1(\vec{s}_j) & \sum_{j=1}^n h_{2,j}^{(i)}L_2(\vec{s}_j) & \cdots & \sum_{j=1}^n h_{2,j}^{(i)}L_k(\vec{s}_j)\\
       \vdots & \vdots & \ddots & \vdots \\
       \sum_{j=1}^n h_{e(i),j}^{(i)}L_1(\vec{s}_j) & \sum_{j=1}^n h_{e(i),j}^{(i)}L_2(\vec{s}_j) & \cdots & \sum_{j=1}^n h_{e(i),j}^{(i)}L_k(\vec{s}_j)\\
     \end{array}
   \right)
\end{gather*}
and
\begin{gather*}
    A\cdot \left(
                       \begin{array}{c}
                          g_{1,i} \\
                          g_{2,i} \\
                          \vdots\\
                          g_{k,i} \\
                        \end{array}
                      \right)=\left(
                       \begin{array}{c}
                          b_{0,i} \\
                          b_{1,i} \\
                          \vdots\\
                          b_{M,i} \\
                        \end{array}
                      \right)\ .
\end{gather*}
The group of malicious nodes combines their equations, and they get a system of linear equations
\begin{equation}\label{equation}
\left\{
  \begin{array}{c}
    \left(
                     \begin{array}{c}
                       D_1 \\
                       \vdots \\
                       D_K \\
                     \end{array}
                   \right)\cdot A=\left(
                                    \begin{array}{c}
                                      C_1 \\
                                      \vdots \\
                                      C_K \\
                                    \end{array}
                                  \right),
\\
    A\cdot \left(
  \begin{array}{cccc}
    g_{1,1} & g_{1,2} & \cdots & g_{1,K} \\
     g_{2,1} & g_{2,2} & \cdots & g_{2,K} \\
     \vdots & \vdots & \ddots & \vdots \\
    g_{k,1} & g_{k,2} & \cdots & g_{k,K}\\
  \end{array}
\right)=\left(
  \begin{array}{cccc}
    b_{0,1} & b_{0,2} & \cdots & b_{0,K} \\
     b_{1,1} & b_{1,2} & \cdots & b_{1,K} \\
     \vdots & \vdots & \ddots & \vdots \\
    b_{M,1} & b_{M,2} & \cdots & b_{M,K}\\
  \end{array}
\right)\ ,
  \end{array}
\right.
\end{equation}
where
\begin{equation*}
    D_i=\left(
     \begin{array}{ccccc}
      \sum_{j=1}^n h_{1,j}^{(i)}& \sum_{j=1}^n h_{1,j}^{(i)}\vec{s}_j & \sum_{j=1}^n h_{1,j}^{(i)}\vec{s}_j^q & \cdots & \sum_{j=1}^n h_{1,j}^{(i)}\vec{s}_j^{q^{M-1}} \\
       \sum_{j=1}^n h_{2,j}^{(i)}&  \sum_{j=1}^n h_{2,j}^{(i)}\vec{s}_j & \sum_{j=1}^n h_{2,j}^{(i)}\vec{s}_j^q & \cdots & \sum_{j=1}^n h_{2,j}^{(i)}\vec{s}_j^{q^{M-1}}\\
      \vdots & \vdots & \vdots & \ddots & \vdots \\
      \sum_{j=1}^n h_{e(i),j}^{(i)}&  \sum_{j=1}^n h_{e(i),j}^{(i)}\vec{s}_j & \sum_{j=1}^n h_{e(i),j}^{(i)}\vec{s}_j^q & \cdots & \sum_{j=1}^n h_{e(i),j}^{(i)}\vec{s}_j^{q^{M-1}}\\
     \end{array}
   \right)
\end{equation*}
and
\begin{equation*}
C_i= \left(
     \begin{array}{cccc}
       \sum_{j=1}^n h_{1,j}^{(i)}L_1(\vec{s}_j) & \sum_{j=1}^n h_{1,j}^{(i)}L_2(\vec{s}_j) & \cdots & \sum_{j=1}^n h_{1,j}^{(i)}L_k(\vec{s}_j) \\
        \sum_{j=1}^n h_{2,j}^{(i)}L_1(\vec{s}_j) & \sum_{j=1}^n h_{2,j}^{(i)}L_2(\vec{s}_j) & \cdots & \sum_{j=1}^n h_{2,j}^{(i)}L_k(\vec{s}_j)\\
       \vdots & \vdots & \ddots & \vdots \\
       \sum_{j=1}^n h_{e(i),j}^{(i)}L_1(\vec{s}_j) & \sum_{j=1}^n h_{e(i),j}^{(i)}L_2(\vec{s}_j) & \cdots & \sum_{j=1}^n h_{e(i),j}^{(i)}L_k(\vec{s}_j)\\
     \end{array}
   \right)\ .
\end{equation*}
Denote
\begin{equation*}
 S_n=   \left(
      \begin{array}{ccccc}
        1 & \vec{s}_1 & \vec{s}_1^{q} & \cdots & \vec{s}_1^{q^{M-1}} \\
        1 & \vec{s}_2 & \vec{s}_2^{q} & \cdots & \vec{s}_2^{q^{M-1}} \\
        \vdots & \vdots & \vdots & \ddots & \vdots \\
         1 & \vec{s}_n & \vec{s}_n^{q} & \cdots & \vec{s}_n^{q^{M-1}} \\
      \end{array}
    \right) \ .
\end{equation*} Then
\[
  D_i=H_i\cdot S_n\ .
\]

\begin{lem}\label{keylem}
Let $P$ be the subspace of $\fl^k$ generated by $\{g_{i_1},g_{i_2},\cdots,g_{i_K}\}$, where $g_{i_j}$ represents the $i_j$-th column of the
generator matrix $G$. Suppose $K_0=\dim{P}\leqslant k-1$. Then there exists exact $q^{l(M+1-r_0)(k-K_0)}$ matrices $A$ satisfying the system of equations (\ref{equation}), where
  \begin{equation*}
  r_0=\mathrm{rank}\left(
      \begin{array}{c}
        H_{i_1}S_n \\
        H_{i_2}S_n \\
        \vdots \\
        H_{i_K}S_n \\
      \end{array}
    \right)\ .
 \end{equation*}
\end{lem}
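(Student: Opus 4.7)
The plan is to interpret system (\ref{equation}) as a linear system on the entries of $A \in \fl^{(M+1)\times k}$ over $\fl$ and to count the solution set via the kernel of the associated $\fl$-linear map. Write $\mathcal{D}$ for the vertical stack of the blocks $H_{i_j}S_n$ ($j=1,\ldots,K$), which by hypothesis has rank $r_0$, and let $G' \in \fl^{k\times K}$ be the matrix whose columns are $g_{i_1},\ldots,g_{i_K}$, so that $P$ is the column span of $G'$ and has $\fl$-dimension $K_0$. Then (\ref{equation}) reads $\mathcal{D}A = \mathcal{C}$ and $AG' = B'$ for the specified right-hand sides, and its solution set is either empty or an affine coset of $\ker\phi$, where $\phi(A) = (\mathcal{D}A,\,AG')$. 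Because the honest key matrix chosen by the trusted authority is itself a solution, the solution set is non-empty, so the desired count equals $|\ker\phi|$.

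To compute $|\ker\phi|$ I would peel off the two constraints one at a time. The equation $\mathcal{D}A = 0$ forces every column of $A$ to lie in $\ker\mathcal{D}$, a subspace of $\fl^{M+1}$ of $\fl$-dimension $M+1-r_0$. Fix an $\fl$-basis of $\ker\mathcal{D}$ and assemble it as an injective matrix $U \in \fl^{(M+1)\times(M+1-r_0)}$; then every $A$ satisfying the first constraint can be written uniquely as $A = UT$ with $T \in \fl^{(M+1-r_0)\times k}$. Substituting into $AG' = 0$ yields $UTG' = 0$, which because $U$ has full column rank is equivalent to $TG' = 0$. This says that each of the $M+1-r_0$ rows of $T$ lies in the left null space $V_2 = \{r \in \fl^{1\times k} : rG' = 0\}$, a subspace of $\fl$-dimension $k - K_0$. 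Parametrizing rows of $T$ by an $\fl$-basis of $V_2$ sets up a bijection between $\ker\phi$ and $\fl^{(M+1-r_0)\times(k-K_0)}$, yielding $|\ker\phi| = q^{l(M+1-r_0)(k-K_0)}$.

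The delicate point is that the two constraints act on different ``sides'' of $A$: $\mathcal{D}A$ constrains columns while $AG'$ constrains rows, so the joint kernel is not a priori the product of the two individual kernels. The two-step reduction above is what resolves this: after passing to the parametrization $A = UT$ dictated by the column constraint, injectivity of $U$ allows the row constraint to be transported to an independent row constraint on the smaller matrix $T$. Together with the non-emptiness of the solution set (supplied by the honest $A$), this gives exactly the count claimed in the lemma.
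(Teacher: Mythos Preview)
Your proof is correct. Both you and the paper ultimately compute the size of the affine solution set by determining the dimension of the kernel of the associated linear map on the entries of $A$, but the paths differ.

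The paper vectorizes $A$ into a column of length $k(M+1)$ and builds the full coefficient matrix: the constraint $\mathcal{D}A=\mathcal{C}$ becomes a block-diagonal matrix with $k$ copies of $\mathcal{D}$, and the constraint $AG'=B'$ becomes $K$ block rows of the form $(g_{1,i}I_{M+1}\;\cdots\;g_{k,i}I_{M+1})$. It then asserts that the rank of this stacked matrix is $r_0k+(M+1-r_0)K_0$ and reads off the solution count from the rank--nullity formula. This is correct but the rank computation is only sketched, and the consistency of the system is taken for granted.

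Your argument keeps $A$ in matrix form and peels the constraints in sequence: first parametrize $\{\mathcal{D}A=0\}$ by $A=UT$ with $U$ a basis matrix for $\ker\mathcal{D}$, then use the injectivity of $U$ to reduce $AG'=0$ to $TG'=0$, which constrains the rows of $T$ to the left null space of $G'$. This avoids the large Kronecker-structured matrix entirely and makes the dimension count transparent. You also make explicit that the honest key matrix supplies a particular solution, so the affine set is non-empty --- a point the paper leaves implicit. Your route is cleaner; the paper's has the minor advantage that someone wanting the explicit coefficient matrix (e.g., for implementation) sees it written out.
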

\begin{proof} Without loss of generality, we assume $\{i_1,i_2,\cdots,i_K\}=\{1,2,\cdots,K\}$.
Recall the system~(\ref{equation})
\begin{equation*}
\left\{
  \begin{array}{c}
    \left(
                     \begin{array}{c}
                       H_1S_n \\
                       \vdots \\
                       H_KS_n \\
                     \end{array}
                   \right)\cdot A=\left(
                                    \begin{array}{c}
                                      C_1 \\
                                      \vdots \\
                                      C_K \\
                                    \end{array}
                                  \right),
\\
    A\cdot \left(
  \begin{array}{cccc}
    g_{1,1} & g_{1,2} & \cdots & g_{1,K} \\
     g_{2,1} & g_{2,2} & \cdots & g_{2,K} \\
     \vdots & \vdots & \ddots & \vdots \\
    g_{k,1} & g_{k,2} & \cdots & g_{k,K}\\
  \end{array}
\right)=\left(
  \begin{array}{cccc}
    b_{0,1} & b_{0,2} & \cdots & b_{0,K} \\
     b_{1,1} & b_{1,2} & \cdots & b_{1,K} \\
     \vdots & \vdots & \ddots & \vdots \\
    b_{M,1} & b_{M,2} & \cdots & b_{M,K}\\
  \end{array}
\right)\ .
  \end{array}
\right.
\end{equation*}
Rewrite the matrix $A$ of variables as a single column of $k(M+1)$ variables. Then the system~(\ref{equation}) becomes
 \begin{equation}\label{equ2}
    \left(
      \begin{array}{cccc}
        H_1S_n &  &  &  \\
         & H_1S_n &  &  \\
         &  & \ddots &  \\
         &  &  & H_1S_n \\
        \vdots & \vdots & \ddots & \vdots \\
         H_KS_n &  &  &  \\
          & H_KS_n &  &  \\
          &  & \ddots &  \\
          &  &  & H_KS_n \\
        g_{1,1}\vec{I}_{M+1} & g_{2,1}\vec{I}_{M+1} & \cdots & g_{k,1}\vec{I}_{M+1} \\
        g_{1,2}\vec{I}_{M+1} & g_{2,2}\vec{I}_{M+1} & \cdots & g_{k,2}\vec{I}_{M+1} \\
        \vdots & \vdots & \ddots & \vdots \\
        g_{1,K}\vec{I}_{M+1} & g_{2,K}\vec{I}_{M+1} & \cdots & g_{k,K}\vec{I}_{M+1} \\
      \end{array}
    \right)\cdot
    \left(
      \begin{array}{c}
        a_{0,1} \\
        a_{1,1} \\
        \vdots \\
        a_{M,1} \\
        a_{0,2}\\
        a_{1,2}\\
        \vdots \\
        a_{M,2}\\
        \vdots \\
         a_{0,k}\\
         a_{1,k}\\
         \vdots \\
         a_{M,k}\\
      \end{array}
    \right)=T
    \end{equation}
where $\vec{I}_{M+1}$ is the identity matrix with rank ($M+1$) and $T$ is the column vector of the constant terms in system (\ref{equation}) with proper order. Notice that
\begin{equation*}
   r_0= \mathrm{rank}\left(
                         \begin{array}{c}
                           H_1S_n \\
                           H_2S_n \\
                           \vdots \\
                           H_KS_n \\
                         \end{array}
                       \right) =  \mathrm{rank}\left(\left(
                                                       \begin{array}{c}
                                                        H_1 \\
                                                        H_2 \\
                                                        \vdots \\
                                                        H_K \\
                                                       \end{array}
                                                     \right) \cdot S_n \right)
                                                     \leqslant \min\left\{\mathrm{rank}\left(
                                                                                               \begin{array}{c}
                                                                                                H_1 \\
                                                                                                H_2 \\
                                                                                                \vdots \\
                                                                                                H_K \\
                                                                                               \end{array}
                                                                                               \right), n  \right\} \ .
      \end{equation*}
 Also note that rows of
 \begin{equation*}
  \left(
      \begin{array}{c}
        H_1S_n \\
        H_2S_n \\
        \vdots \\
        H_KS_n \\
      \end{array}
    \right)
 \end{equation*}
  is contained in the space $\f^{M+1}$ generated by $g_{i,j}\vec{I}_{M+1}$ if $g_{i,j}\neq 0$.
  So the rank of the big matrix of coefficients equals to
  \[
      r_0k+(M+1-r_0)K_0
  \]
which is less than the number of variables $k(M+1)$. So the system (\ref{equ2}) has
$$q^{l(k(M+1)-( r_0k+(M+1-r_0)K_0))}=q^{l(M+1-r_0)(k-K_0)}$$
solutions, i.e., the system (\ref{equation}) has $q^{l(M+1-r_0)(k-K_0)}$ solutions.
\end{proof}
\begin{rem}
From Lemma \ref{keylem}, in order to cut down the extra costs introduced by authentication, we could choose $M=n$. In this case, $M$ is the minimal integer such that $M\geqslant n$ and Lemma \ref{keylem} holds. Lemma \ref{keylem} is the key lemma in~\cite{zlf} by which we can remove a very important condition in the main result of~\cite{OF2}.
\end{rem}

Note that if $C\,[n,k,d=n-k+1]$ is an MDS code, then whenever $K\leqslant k-1$ the vectors in any $K$-subset of columns of $G$ are linearly independent.

By Lemma \ref{keylem}, the security of our authentication scheme follows.
\begin{thm}\label{general}
The scheme we constructed above with $M=n$ is an unconditionally secure authentication code for network coding against a coalition of up to ($d(C^\bot)-2$) malicious verifiers.
\end{thm}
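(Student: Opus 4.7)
The plan is to show that, from the viewpoint of any coalition $\{R_{i_1},\ldots,R_{i_K}\}$ with $K\leqslant d(C^\bot)-2$, the ``label''
\[
\ell(A)\;:=\;ab_{0,i_0}+\sum_{t=1}^{n}\vec{s}^{q^{t-1}}b_{t,i_0}\;=\;\vec{w}\,A\,g_{i_0},\qquad\vec{w}=(a,\vec{s},\vec{s}^q,\ldots,\vec{s}^{q^{n-1}}),
\]
that an uncorrupted receiver $R_{i_0}$ would use to check a forged packet is uniformly distributed on $\fl$. Since $d(C^\bot)\geqslant 2$ forces $g_{i_0}\neq 0$, the coalition can always find a tag $\vec{v}$ making $\sum_j\vec{v}_jg_{j,i_0}$ equal to any prescribed $\beta$, so the forgery probability equals the probability of guessing $\ell(A)$ correctly, and uniformity of $\ell(A)$ pins this down to $1/q^l$.

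I would next isolate the ``honest'' case from a genuine attack. A short Frobenius-linearity calculation shows that $\vec{w}\in\mathrm{rowspan}_\fl(S_n)$ precisely when $(a,\vec{s})=\sum_j\alpha_j(1,\vec{s}_j)$ for some $\alpha_j\in\f$, which is the case in which $R_{i_0}$ is supposed to accept the packet. Hence in any genuine attack $\vec{w}\notin\mathrm{rowspan}_\fl(S_n)$, and because $\mathrm{rowspan}_\fl(H_{i_r}S_n)\subseteq\mathrm{rowspan}_\fl(S_n)$ for each $r$, also $\vec{w}$ lies outside the rowspan of the stacked matrix with blocks $H_{i_1}S_n,\ldots,H_{i_K}S_n$.

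The central step is to turn $K\leqslant d(C^\bot)-2$ into a usable free direction. That hypothesis says any $K+1$ columns of the parity-check matrix $G$ of $C^\bot$ are $\f$-linearly independent, so $g_{i_1},\ldots,g_{i_K},g_{i_0}$ are $\fl$-linearly independent in $\fl^k$. Extend them to an $\fl$-basis of $\fl^k$ and, given any $v$ in the common kernel $\bigcap_{r=1}^{K}\ker(H_{i_r}S_n)\subseteq\fl^{n+1}$, define $A'\in\fl^{(n+1)\times k}$ by $A'g_{i_r}=0$ for $r=1,\ldots,K$, $A'g_{i_0}=v$, and $A'=0$ on the remaining basis vectors. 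Each column of $A'$ is then a scalar multiple of $v$, so $H_{i_r}S_n A'=0$ for every $r$; hence $A'$ sits in the tangent space of the coalition's constraint system~(\ref{equation}), and $\ell(A')=\vec{w}\,v$. By the previous paragraph $\vec{w}$ is outside the rowspan of the stacked matrix, hence outside $\bigl(\bigcap_r\ker(H_{i_r}S_n)\bigr)^{\perp}$, so some $v$ in the common kernel satisfies $\vec{w}\,v\neq 0$. Therefore $\ell$ is a nonzero $\fl$-linear functional on the tangent space; as $A$ ranges uniformly over the nonempty affine set of consistent keys (whose cardinality is provided by Lemma~\ref{keylem}) the label $\ell(A)$ becomes uniform on $\fl$, giving forgery probability $1/q^l$.

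The step I expect to demand the most care is the construction in the third paragraph: the candidate $A'$ must simultaneously satisfy the ``right'' constraints $A'g_{i_r}=0$ and the ``left'' constraints $H_{i_r}S_n A'=0$ while realizing the prescribed column $A'g_{i_0}=v$, and it is exactly here that the two hypotheses cooperate. The column independence supplied by $K\leqslant d(C^\bot)-2$ lets $A'$ be prescribed freely on $g_{i_0}$, while the choice $M=n$ forces $\mathrm{rank}(H_{i_r}S_n)\leqslant n<n+1$, guaranteeing that $\bigcap_r\ker(H_{i_r}S_n)$ is positive-dimensional and therefore contains a $v$ with $\vec{w}\,v\neq 0$. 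Once this is done, translating ``uniform label'' into the bound $1/q^l$ is routine.
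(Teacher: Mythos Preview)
Your argument is correct and reaches the same $1/q^l$ bound via the same two ingredients the paper uses---linear independence of $g_{i_0},g_{i_1},\ldots,g_{i_K}$ (from $K+1\leqslant d(C^\bot)-1$) and invertibility of the extended Moore-type matrix---but you package them differently. The paper fixes $a=1$, extends $g_1,\ldots,g_{K+1}$ to a basis of $\fl^k$, and for each choice $P$ of $A$ on the extra basis columns invokes Lemma~\ref{keylem} to obtain exactly $q^l$ consistent keys; it then shows, using the invertibility of the $(M{+}1)\times(M{+}1)$ matrix with rows $(1,\vec{s}_j,\ldots,\vec{s}_j^{q^{M-1}})$, that these $q^l$ keys give $q^l$ distinct labels, so every fibre of $\varphi_{\vec{s}_{M+1}}$ has size $q^{l(k-K-1)}$. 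Your route is the linear-algebra dual: rather than count fibres you observe that $\ell$ is affine on the solution set and exhibit a single tangent direction $A'$ on which $\ell$ is nonzero, which immediately forces uniformity. Your rank-one construction of $A'$ (prescribing $A'g_{i_0}=v$, annihilating the other $g_{i_r}$, with $v$ in the common left kernel) is a clean way to see exactly where the two hypotheses interact, and it buys you a proof that does not need the explicit solution count from Lemma~\ref{keylem}.

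One caution: the ``short Frobenius-linearity calculation'' behind your biconditional $\vec{w}\in\mathrm{rowspan}_{\fl}(S_n)\iff (a,\vec{s})\in\f\text{-span}\{(1,\vec{s}_j)\}$ is not as short as advertised. For general $a\in\f$ the forward direction amounts to showing that the unique $\f$-linearized polynomial $g(x)=\sum_{t=1}^n d_t x^{q^{t-1}}$ with $g(\vec{s}_j)=-1$ satisfies $g^{-1}(\f)=\f\text{-span}\{\vec{s}_j\}$; this uses $\dim_{\f}\ker g\leqslant n-1$ (a degree bound) together with $\f\subseteq\mathrm{im}\,g$. That is precisely the content of the paper's Moore-matrix invertibility step (which the paper states only for $a=1$), so do not skip over it---once it is written out, your proof and the paper's are two presentations of the same mechanism.
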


\begin{proof} Suppose the source node sends $M=n$ $\f$-linearly independent vectors $\vec{s}_1,\vec{s}_2,\cdots,\vec{s}_{M}$, i.e., a basis for a subspace message.
It is enough to consider the case that $K=d(C^\bot)-2$ malicious nodes have received $M$ $\f$-linearly independent vectors $\vec{y}_1,\vec{y}_2,\cdots,\vec{y}_{M}$ and all these malicious nodes are verifying nodes, this is because in this case they know the most information about the key matrix $A$. In other words, the subspace generated by the vectors they received is the subspace sent by the source node. This is also equivalent to the condition: the coalition of global kernels at each malicious node $R_1,\cdots,R_K$ has the rank
\begin{equation*}
  \mathrm{rank}\left(
                       \begin{array}{c}
                        H_1 \\
                        H_2 \\
                        \vdots \\
                        H_K \\
                       \end{array}
                       \right)= n\ .
\end{equation*}
 And under these conditions, they want to make a substitution attack to any other verifying node.

Suppose the malicious nodes $R_1,\cdots,R_K$ want to generate a valid $M$-dimensional subspace $\f\vec{s}'_1\oplus\f\vec{s}'_2\oplus\cdots\oplus\f\vec{s}'_{M}$ such that it can be accepted by $R_{K+1}$. It is equivalent to generating a valid vector $[1,\vec{s}_{M+1},\vec{v}_1,\vec{v}_2,\cdots,\vec{v}_k]$ with
$\vec{s}_{M+1}\notin \f\vec{s}_1\oplus\f\vec{s}_2\oplus\cdots\oplus\f\vec{s}_{M}$
such that it can be accepted by $R_{K+1}$.
 So what they try to do is to guess the label $b_{0,K+1}+b_{1,K+1}\vec{s}_{M+1}+b_{2,K+1}\vec{s}_{M+1}^{q}+\cdots+b_{M,K+1}\vec{s}_{M+1}^{q^{M-1}}$ for some $\vec{s}_{M+1}\notin \f\vec{s}_1\oplus\f\vec{s}_2\oplus\cdots\oplus\f\vec{s}_{M}$ and construct a vector\footnote{This construction step is trivial.} $(\vec{v}_1,\vec{v}_2,\cdots,\vec{v}_k)\in\fl^{k\times 1}$ such that
 \begin{equation*}
\sum_{i=1}^k g_{i,K+1}\vec{v}_i=b_{0,K+1}+b_{1,K+1}\vec{s}_{M+1}+b_{2,K+1}\vec{s}_{M+1}^{q}+\cdots+b_{M,K+1}\vec{s}_{M+1}^{q^{M-1}}\ .
 \end{equation*}
Then the fake message $[1,\vec{s}_{M+1}, \vec{v}_1,\vec{v}_2,\cdots,\vec{v}_k]$ can be accepted by $R_{K+1}$.

In this case, by Lemma \ref{keylem}, there exists $q^{l(k-d(C^\bot)+2)}$ matrices $A$ satisfying the system of equations~(\ref{equation}).

For any $\vec{s}_{M+1}\notin \f\vec{s}_1\oplus\f\vec{s}_2\oplus\cdots\oplus\f\vec{s}_{M}$, we define
\begin{equation*}
    \begin{array}{rccc}
      \varphi_{s_{M+1}}: & \{\textrm{Solutions of System (\ref{equation})}\} & \longrightarrow & \fl \\
       & A & \mapsto & (1,\vec{s}_{M+1},\vec{s}_{M+1}^q,\cdots,\vec{s}_{M+1}^{q^{M-1}})A\left(\begin{array}{c}
                                                                g_{1,K+1} \\
                                                                g_{2,K+1} \\
                                                                \vdots \\
                                                                g_{k,K+1}
                                                              \end{array}\right)\ .
    \end{array}
\end{equation*}
Then we claim:
\begin{description}
\item[(1)] $\varphi_{\vec{s}_{M+1}}$ is surjective.
\item[(2)] for any $y\in \f$, the number of the inverse image of $y$ is $\#\varphi_{\vec{s}_{M+1}}^{-1}(y)=q^{l(k-d(C^\bot)+1)}$.
\end{description}
So the information held by the colluders allows them to calculate $q^l$ equally likely different labels for $s_{M+1}$ and hence their probability of success is $1/q^l$ which is equal to guess a label $b_{0,K+1}+b_{1,K+1}s_{M+1}+b_{2,K+1}s_{M+1}^{q}+\cdots+b_{M,K+1}s_{M+1}^{q^{M-1}}$ for $s_{M+1}$ randomly from $\fl$. And hence we finish the proof of the theorem.

Next, we prove our claim. As $K+1=d(C^\bot)-1$, $g_1,g_2,\cdots,g_{K+1}$ is linearly independent over $\fl$, otherwise the dual code $C^\bot$ will have a codeword with Hamming weight $\leqslant d(C^\bot)-1$ which is impossible by the definition of minimum distance of a code. Then choose $k-K-1=k-d(C^\bot)+1$ extra columns of $G$ such that they combining with $g_1,g_2,\cdots,g_{K+1}$ form a basis of $\fl^k$. Without loss of generality, we assume the first $k$ columns of $G$ is linearly independent of $\fl$. For any $P\in \fl^{(M+1)\times(k-d(C^\bot)+1)}$, consider the system of linear equations
\begin{equation}\label{equ3}
\left\{
  \begin{array}{rl}
  \left(
      \begin{array}{c}
        H_1S_M \\
        H_2S_M \\
        \vdots \\
        H_KS_M \\
      \end{array}
    \right)\cdot A=&
   \left(
                                    \begin{array}{c}
                                      C_1 \\
                                      \vdots \\
                                      C_K \\
                                    \end{array}
                                  \right),
\\
    A\cdot \left(
  \begin{array}{cccc}
    g_{1,1} & g_{1,2} & \cdots & g_{1,K} \\
     g_{2,1} & g_{2,2} & \cdots & g_{2,K} \\
     \vdots & \vdots & \ddots & \vdots \\
    g_{k,1} & g_{k,2} & \cdots & g_{k,K}\\
  \end{array}
\right)=&\left(
  \begin{array}{cccc}
    b_{0,1} & b_{0,2} & \cdots & b_{0,K} \\
     b_{1,1} & b_{1,2} & \cdots & b_{1,K} \\
     \vdots & \vdots & \ddots & \vdots \\
    b_{M,1} & b_{M,2} & \cdots & b_{M,K}\\
  \end{array}
\right),
\\
 A\cdot \left(
  \begin{array}{cccc}
    g_{1,K+2} & g_{1,K+3} & \cdots & g_{1,k} \\
     g_{2,K+2} & g_{2,K+3} & \cdots & g_{2,k} \\
     \vdots & \vdots & \ddots & \vdots \\
    g_{k,K+2} & g_{k,K+3} & \cdots & g_{k,k}\\
  \end{array}
\right)=&P
\ .
  \end{array}
\right.
\end{equation}
By Lemma \ref{keylem}, System (\ref{equ3}) has $q^l$ solutions, saying $A_1,A_2,\cdots,A_{q^l}$. And $A_1,A_2,\cdots,A_{q^l}$ are also solutions of System~ (\ref{equation}). Next, we show
\[
   \left\{\varphi_{\vec{s}_{M+1}}(A_j)\,|\,j=1,2,\cdots,q^l\right\}=\fl\ .
\]
Otherwise, there are two solutions $A_{j_1}$ and $A_{j_2}$ such that
\begin{equation*}
    (1,\vec{s}_{M+1},\vec{s}_{M+1}^q,\cdots,\vec{s}_{M+1}^{q^{M-1}})A_{j_1}\left(\begin{array}{c}
                                                                g_{1,K+1} \\
                                                                g_{2,K+1} \\
                                                                \vdots \\
                                                                g_{k,K+1}
                                                              \end{array}\right)=(1,\vec{s}_{M+1},\vec{s}_{M+1}^q,\cdots,\vec{s}_{M+1}^{q^{M-1}})A_{j_2}\left(\begin{array}{c}
                                                                g_{1,K+1} \\
                                                                g_{2,K+1} \\
                                                                \vdots \\
                                                                g_{k,K+1}
                                                              \end{array}\right)\ .
\end{equation*}
Then we have
\begin{equation*}
\begin{array}{rl}
    & \left(
     \begin{array}{ccccc}
     1&  \vec{s}_1 &\vec{s}_1^q & \cdots & \vec{s}_1^{q^{M-1}} \\
     1&  \vec{s}_2 &\vec{s}_2^q & \cdots & \vec{s}_2^{q^{M-1}} \\
     \vdots&  \vdots & \vdots & \ddots & \vdots \\
     1&  \vec{s}_{M+1} &\vec{s}_{M+1}^q & \cdots & \vec{s}_{M+1}^{q^{M-1}} \\
     \end{array}
   \right) A_{j_1} \left(
  \begin{array}{cccc}
    g_{1,1} & g_{1,2} & \cdots & g_{1,k} \\
     g_{2,1} & g_{2,2} & \cdots & g_{2,k} \\
     \vdots & \vdots & \ddots & \vdots \\
    g_{k,1} & g_{k,2} & \cdots & g_{k,k}\\
  \end{array}
\right)\\
= &
     \left(
     \begin{array}{ccccc}
     1&  \vec{s}_1 &\vec{s}_1^q & \cdots & \vec{s}_1^{q^{M-1}} \\
     1&  \vec{s}_2 &\vec{s}_2^q & \cdots & \vec{s}_2^{q^{M-1}} \\
     \vdots&  \vdots & \vdots & \ddots & \vdots \\
     1&  \vec{s}_{M+1} &\vec{s}_{M+1}^q & \cdots & \vec{s}_{M+1}^{q^{M-1}} \\
     \end{array}
   \right) A_{j_2} \left(
  \begin{array}{cccc}
    g_{1,1} & g_{1,2} & \cdots & g_{1,k} \\
     g_{2,1} & g_{2,2} & \cdots & g_{2,k} \\
     \vdots & \vdots & \ddots & \vdots \\
    g_{k,1} & g_{k,2} & \cdots & g_{k,k}\\
  \end{array}
\right) \ .
\end{array}
\end{equation*}
But the Moore matrix
\begin{equation*}
     \left(
     \begin{array}{ccccc}
     1&  \vec{s}_1 &\vec{s}_1^q & \cdots & \vec{s}_1^{q^{M-1}} \\
     1&  \vec{s}_2 &\vec{s}_2^q & \cdots & \vec{s}_2^{q^{M-1}} \\
     \vdots&  \vdots & \vdots & \ddots & \vdots \\
     1&  \vec{s}_{M+1} &\vec{s}_{M+1}^q & \cdots & \vec{s}_{M+1}^{q^{M-1}} \\
     \end{array}
   \right)
   \end{equation*}
 is invertible since $\vec{s}_1,\vec{s}_2,\cdots,\vec{s}_{M+1}\in\fl$ are linearly independent over $\f$. And the matrix
  \begin{equation*}
 \left(
  \begin{array}{cccc}
    g_{1,1} & g_{1,2} & \cdots & g_{1,k} \\
     g_{2,1} & g_{2,2} & \cdots & g_{2,k} \\
     \vdots & \vdots & \ddots & \vdots \\
    g_{k,1} & g_{k,2} & \cdots & g_{k,k}\\
  \end{array}
\right)
\end{equation*}
is invertible by our assumption. So $A_{j_1}=A_{j_2}$ which contradicts to the condition $A_{j_1}\neq A_{j_2}$. And hence, the statement (1) holds.

Next, we prove (2).
Any one solution of System~(\ref{equation}) gives one $P\in \fl^{(M+1)\times(k-d(C^\bot)+1)}$, while corresponding to such a $P$ there are $q^l$ solutions of System~(\ref{equation}) from the proof of (1). In this way, we partition solutions of System~(\ref{equation}) into $q^{l(k-d(C^\bot)+1)}$ parts such that each part contains $q^l$ elements. Also from the proof of (1), the image of each part under $\varphi_{\vec{s}_{M+1}}$ is $\fl$. So for any $y\in \fl$, the number of the inverse image of $y$ is $\#\varphi_{\vec{s}_{M+1}}^{-1}(y)=q^{l(k-d(C^\bot)+1)}$. So far, we have finished the proof of our claim.

\end{proof}

\begin{rem}\label{motivation}
From the proofs of Lemma \ref{keylem} and Theorem \ref{general}, the coalition of malicious nodes $B$ can successfully make a substitution attack to the node $R_i$ if and only if $g_i$ is contained in the subspace of $\fl^k$ generated by $\{g_j\,|\,j\in B\}$, where $g_j$ represents the $j$-th column of the
generator matrix $G$. In this case, they can recover the private key of $R_i$ using the linearity relationship. So we connect our authentication scheme with the linear secret sharing scheme in the way that we regard the private key of $R_i$ as the secret key and the private keys of other verifying nodes are shares of the private key of $R_i$. Then similarly as the linear secret sharing scheme \cite{M1,M2} which considered the first component of codewords as the secret key location, using the modified definition of the minimal codewords of a linear code given in the introducion, we can characterize the malicious groups that can successfully make a substitution attack to some other node completely. This is what Theorem~\ref{thm2} and Corollary~\ref{cor} say about.
\end{rem}

\section{The Authentication Scheme Based on Algebraic Geometry Codes}
In this section, we give examples of our authentication schemes based on some explicit linear codes, AG codes from elliptic curves. First, recall the definition of AG codes.

We fix some notation valid for this entire section.
\begin{itemize}
\item\emph{
 $X/\f$ is a geometrically irreducible smooth projective curve of genus $g$  over the finite field $\f$
with function field $\f(X)$. }
\item \emph{ $X(\f)$ is the set of all $\f$-rational points on $X$.}
\item\emph{ $D=\{R_{1},R_{2},\cdots,R_{n}\}$ is a proper subset of
rational points $X(\f)$.}
\item\emph{Without any confusion, also write
$D=R_{1}+R_{2}+\cdots+R_{n}$.}
\item\emph{ $G$ is a divisor of degree $k$ ($2g-2<k<n$) with
$\mathrm{Supp}(G)\cap D=\emptyset$.}
\end{itemize}
Let $V$  be a divisor on $X$. Denote by $\mathscr{L}(V)$ the
$\f$-vector space of all rational functions $f\in \f(X)$ with the
principal divisor $\mathrm{div}(f)\geqslant -V$, together with the
zero function. And Denote by $\Omega(V)$ the $\f$-vector space of
all Weil differentials $\omega$ with divisor
$\mathrm{div}(\omega)\geqslant V$, together with the zero
differential (cf.~\cite{Sti}).

Then the residue AG code $C_{\Omega}(D, G)$ is defined to be the
image of the following residue map:
\[
    res: \Omega(G-D)\rightarrow \f^{n};\, \omega\mapsto
     (res_{R_{1}}(\omega),res_{R_{2}}(\omega),\cdots,res_{R_{n}}(\omega))\enspace .
\]
 And its dual
code, the functional AG code $C_{\mathscr{L}}(D, G)$ is defined to
be the image of the following evaluation map:
\[
       ev: \mathscr{L}(G)\rightarrow \f^{n};\, f\mapsto
       (f(R_{1}),f(R_{2}),\cdots,f(R_{n}))\enspace .
\]

They have the code parameters $[n, n-k+g-1, d\geqslant k-2g+2]$ and
$[n, k-g+1, d\geqslant n-k]$, respectively. And we have the following isomorphism
\[
    C_{\Omega}(D, G)\cong C_{\mathscr{L}}(D, D-G+(\eta))
\]
for some Weil differential $\eta$ satisfying
$\upsilon_{P_{i}}(\eta)=-1$ and $\eta_{P_{i}}(1)=1$ for all
$i=0,1,2,\cdots,n$ (\cite[Proposition~2.2.10]{Sti}).

For the authentication scheme based on the simplest AG codes, i.e., generalized Reed-Solomon codes, we have determined all the malicious groups that can make a substitution attack to any (not necessarily all) other in Corollary~\ref{cor}. Next, we consider the authentication scheme based on AG codes $C_{\Omega}(D, G)$ from elliptic curves. Using the Riemann-Roch theorem, the malicious groups who together are able to make a substitution attack to any (not necessarily all) other or not can be characterized completely as follows.
\begin{thm}\label{esss}
Let $X=E$ be an elliptic curve over $\f$,
$D=\{R_{1},R_{2},\cdots,R_{n}\}$ a  subset of $E(\f)$ such that the
zero element $O\notin D$ and let $G=kO$ ($0<k<n$). Then for the authentication scheme we constructed based on the AG code $C_{\Omega}(D, G)$, we have
\begin{description}
\item[(i)] Any coalition of up to $(n-k-2)$ malicious receivers can not make a substitution attack to any other receiver.
\item[(ii)] A malicious group $A\subseteq D$, $\#A=n-k-1$, can successfully make a substitution attack to the receiver $R_j\in D\setminus A$ if and only
if
\[
      \sum_{P\in D\setminus A}P=R_j\enspace .
\]
Moreover, we note that they can only successfully make a substitution attack to the receiver $\sum_{P\in D\setminus A}P$ if $\sum_{P\in D\setminus A}P\in D\setminus A$.
%And hence, they can successfully make a substitution attack to any other receiver $R\in D\setminus A$.
 \item[(iii)] A malicious group $A\subseteq D$, $\#A=n-k$, can successfully make a substitution attack to the receiver $R_j\in D\setminus A$ if and
only if
there exists some $Q\in E(\f)\setminus\{R_j\}$ such that the sum
\[
      Q+\sum_{P\in D\setminus A}P=R_j\enspace ,
\]
which is equivalent to
\[
 \sum_{P\in D\setminus A}P\neq O\ .
\]
%So if such a malicious group $A\subseteq D$ with cardinality $(n-k)$ can successfully make a substitution attack to any one other receiver, then $\sum_{P\in D\setminus A}P\neq O$.
And hence, such a malicious group can successfully make a substitution attack to any other receiver.
\item[(iv)] A malicious group with at least $(n-k+1)$ members can successfully make a substitution attack to any other receiver.
\end{description}
\end{thm}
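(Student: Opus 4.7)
The plan is to reduce the entire statement to a Riemann--Roch computation on the elliptic curve by way of the ``minimal codeword'' characterisation already hinted at in Remark~\ref{motivation}. Namely, by Remark~\ref{motivation} a malicious coalition $A$ can substitute successfully against $R_j$ if and only if the $j$-th column $g_j$ of the generator matrix lies in the $\fl$-span of $\{g_i\,:\,i\in A\}$, equivalently (by duality) if and only if there exists a codeword in the dual code with support inside $A\cup\{j\}$ whose $j$-th coordinate is nonzero. Because $C_\Omega(D,G)^\bot\cong C_{\mathscr L}(D,G)$ via the evaluation map, this condition becomes: there exists $f\in \mathscr L(G)=\mathscr L(kO)$ with $f(R_j)\neq 0$ and $f(R_i)=0$ for every $i\notin A\cup\{j\}$. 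Writing $D'=D\setminus(A\cup\{j\})$, such an $f$ lives in $\mathscr L(kO-D')$, and what must be decided is whether this space is nonzero and contains a function not vanishing at~$R_j$.

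Now I would apply Riemann--Roch on the elliptic curve, using $g=1$ and the canonical divisor $K\sim 0$: for any divisor $E$ of positive degree, $\ell(E)=\deg E$, for $\deg E<0$, $\ell(E)=0$, and for $\deg E=0$, $\ell(E)=1$ iff $E$ is principal, else $\ell(E)=0$. Set $D'=D\setminus(A\cup\{j\})$, so $\deg(kO-D')=k-(n-|A|-1)$. The four cases of the theorem correspond exactly to $\deg(kO-D')\in\{<0,\;=0,\;=1,\;\geqslant 2\}$:
\begin{itemize}
\item For (i), $|A|\leqslant n-k-2$ gives $\deg(kO-D')\leqslant -1$, hence $\ell(kO-D')=0$ and no attacking $f$ exists.
\item For (iv), $|A|\geqslant n-k+1$ gives $\deg(kO-D')\geqslant 2$, hence $\ell(kO-D')\geqslant 2$, and since $\ell(kO-D'-R_j)=\ell(kO-D')-1$ the subspace of functions vanishing at $R_j$ is proper, so an $f$ with $f(R_j)\neq 0$ exists.
\end{itemize}

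The intermediate cases (ii) and (iii) are where the elliptic group law enters, and they are the main content of the theorem. In case (ii), $|A|=n-k-1$ gives $\deg(kO-D')=0$, so a nonzero $f$ exists iff $D'\sim kO$. Under the standard isomorphism $E(\f)\xrightarrow{\sim}\mathrm{Pic}^0(E)$, $P\mapsto [P-O]$, the relation $D'\sim kO$ is equivalent to $\sum_{P\in D'}P=O$ in the group law, i.e., to $\sum_{P\in D\setminus A}P=R_j$ since $D\setminus A=D'\sqcup\{R_j\}$. When such $f$ exists it is unique up to scalar with $\mathrm{div}(f)=D'-kO$ exactly, so its zero set is $D'$ and automatically $f(R_j)\neq 0$; this establishes (ii), and the ``moreover'' statement follows because the target $R_j$ is forced to be the specific point $\sum_{P\in D\setminus A}P$. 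In case (iii), $|A|=n-k$ gives $\deg(kO-D')=1$, hence $\ell(kO-D')=1$ and the unique (up to scalar) candidate $f$ satisfies $\mathrm{div}(f)=D'+Q-kO$ for a unique $Q\in E(\f)$ determined by $Q=-\sum_{P\in D'}P$ in the group law. The attack works iff $f(R_j)\neq 0$, i.e., iff $R_j\neq Q$, which after substituting $Q=-\sum_{P\in D'}P$ and $D\setminus A=D'\cup\{R_j\}$ simplifies to $\sum_{P\in D\setminus A}P\neq O$; note this condition is independent of the chosen $R_j\in D\setminus A$, yielding the final clause of (iii).

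The main obstacle is the bookkeeping in case (iii): one has to verify that the unique effective divisor of degree one making $D'+Q-kO$ principal is genuinely a single rational point $Q\in E(\f)$ and then check that the criterion $Q\neq R_j$ reorganises cleanly into the symmetric condition $\sum_{P\in D\setminus A}P\neq O$. Once the Pic$^0$-to-$E(\f)$ dictionary is set up, all four parts fall out of the same Riemann--Roch dimension count, and the only nonroutine ingredient is recognising that $R_j\notin D'$ forces $f(R_j)\neq 0$ automatically in case (ii) while in case (iii) one must rule out the accidental coincidence $Q=R_j$.
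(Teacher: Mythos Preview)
Your proposal is correct and follows essentially the same route as the paper: both reduce the question, via the criterion of Remark~\ref{motivation}, to deciding when $\mathscr L\bigl(kO-\sum_{R\in D\setminus A}R+R_j\bigr)$ contains a function not vanishing at $R_j$, and then invoke Riemann--Roch together with the identification $\mathrm{Pic}^0(E)\cong E(\f)$ to translate principality into the group-law condition. The only noticeable difference is in part~(iv): the paper switches to the contrapositive and counts zeros of a function in $\mathscr L(D-G+(\eta))\cong C_\Omega(D,G)$, whereas you stay on the dual side and use the dimension drop $\ell(kO-D'-R_j)=\ell(kO-D')-1$, which gives a slightly more uniform treatment of all four cases.
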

\begin{proof}
The statement (i) follows from Theorem~\ref{general} as the minimum distance
\[
  d^{\bot}(C_{\Omega}(D, G))=d(C_{\mathscr{L}}(D, G))\geq n-k\ .
\]
%But we want to give another proof of (i). The statements (i) and (iv) hold by Lemma~\ref{AGSS}.

For the statement (ii), if the malicious group $A\subseteq D$, $\#A=n-k-1$, can successfully make a substitution attack to the receiver $R_j\in D\setminus A$, then there exists some non-zero function in the dual code $f\in \mathscr{L}(kO-\sum_{R\in D\setminus A}R+R_j)$, i.e.,
\[
  \mathrm{div}(f)\geqslant \sum_{R\in D\setminus A}R-R_j-kO\ .
\]
Both sides of the above inequality have degree $0$, so
\[
   \mathrm{div}(f)= \sum_{R\in D\setminus A}R-R_j-kO\ .
\]
That is,
\[
    \sum_{R\in D\setminus A}R=R_j\ .
\]

Similarly for the statement (iii), if a malicious group $A\subseteq D$, $\#A=n-k$, can successfully make a substitution attack to the receiver $R_j\in D\setminus A$, there exists some non-zero function $f\in \mathscr{L}(kO-\sum_{R\in D\setminus A}R+R_j)\setminus \mathscr{L}(kO-\sum_{R\in D\setminus A}R) $, i.e.,
\[
  f(R_j)\neq 0\,\mbox{ and }\, \mathrm{div}(f)\geqslant \sum_{R\in D\setminus A}R-R_j-kO\ .
\]
Then there is an extra zero $Q\in E(\f)\setminus\{R_j\}$ of $f$ such that
\[
  \mathrm{div}(f)= \sum_{R\in D\setminus A}R-R_j+Q-kO\ .
\]
That is,
\[
    \sum_{R\in A}R+Q=R_j\ .
\]
The rest of (iii) is obvious.

We prove the statement (iv) by contradiction. A malicious group $A$ can not successfully make a substitution attack to the receiver $R_j$ if and only if
there exists a linear function
\[
    f\in \mathscr{L}(D-G+(\eta))
\]
such that
\[
   f(R_j)=1, \mbox{ and }\ f(R)=0\ \forall R\in A\ .
\]
As $f\in \mathscr{L}(D-G+(\eta))$, $f$ has at most $\deg(D-G+(\eta))=n-k$ zeros. So if
\[
   \# A\geq n-k+1\ ,
\]
the malicious group $A$ can successfully make a substitution attack to any other receiver.

 %By Remark~\ref{motivation}, malicious groups who can successfully make a substitution attack or not to any other receiver are completely determined by the

\end{proof}

Finally, we give a remark on the above theorem to finish this section.
\begin{rem}
If for any $A\subseteq D$ with $\#A=n-k$, the inequality
\[
 \sum_{P\in D\setminus A}P\neq O\ .
\]
holds, then the minimum distance~\cite{cheng,zfw} of the AG code $C_{\Omega}(D, G)$ is
\[
    d(C_{\Omega}(D, G))=k+1\ .
\]
In this case, $C_{\Omega}(D, G)$ is MDS. So by the property of MDS codes, its dual code $C_{\mathscr{L}}(D, G)$ is also MDS, i.e.,
\[
  d(C_{\mathscr{L}}(D, G))=n-k+1\ .
\]
Also in this case, such a malicious group in Theorem \ref{esss}(ii) does not exist. So it coincides with Corollary~\ref{cor}.

On the other side, if $C_{\Omega}(D, G)$ is not MDS, then there exists $A\subseteq D$ with $\#A=n-k$ such that
\[
 \sum_{P\in D\setminus A}P= O\ .
\]
Such a malicious group $A$ can not successfully make a substitution attack to any other receiver.
\end{rem}

\section{Conclusion}
In this paper, we construct an authentication scheme based on linear code $C\,[V,k,d]$ for subspace codes over network coding. It is an unconditional secure authentication scheme, which can offer robustness against a coalition of up to ($d(C^\bot)-2$) malicious receivers. If we take $C$ to be Reed-Solomon codes, then our authentication scheme can be regarded as a modification of the multi-receiver authentication scheme for multiple messages given by Safavi-Naini and Wang~\cite{SW}. The authentication scheme based on the Reed-Solomon code $[V,k,d]$ is a $(V,k)$ threshold authentication scheme, any $k-1$ of the $V$ receivers can not produce a fake message, with a higher probability than randomly guessing a label for the message, that can be accepted by any other receiver, but any $k$ of the $V$ verifying receivers can easily produce a fake message that can be accepted by any other receiver. To generalize the scheme with Reed-Solomon codes to that with arbitrary linear codes, there are several advantages similar as the advantages of generalizing Shamir's secret sharing scheme to linear secret sharing sceme \cite{Shamir,MS,M1,M2,CC}. First, for a fixed message space $\mathcal{G}_{q}(l, n)$, by choosing proper linear codes, our scheme allows arbitrarily many receivers to check the integrity of their own messages. while the scheme with Reed-Solomon codes has a constraint on the number of verifying receivers $V\leqslant q^l$. Secondly, for some important receiver, coalitions of $k$ or more malicious receivers can not yet make a substitution attack on the receiver more efficiently than randomly guessing a label from the finite field for a fake message.

%------------reference---------------------------------

\bibliographystyle{ieeetran}
\bibliography{authentication}

\end{document}